\newtheorem{lemma}{Lemma}
\newtheorem{corollary}{Corollary}
\newtheorem*{question}{Question}
\newtheorem*{conjecture*}{Conjecture}
\newtheorem{theorem}{Theorem}
\newtheorem{proposition}{Proposition}
\newtheorem{definition}{Definition}
\title{Defining binary phylogenetic trees using parsimony: new bounds}
\author{{\hspace{1mm}Mirko Wilde}\\
	Institute of Mathematics and Computer Science\\
	University of Greifswald\\
	Walther-Rathenau-Str. 47, Greifswald, Mecklenburg-Vorpommerania, Germany \\
	\texttt{mirko.wilde@uni-greifswald.de} \\
	\And
	{\hspace{1mm}Mareike Fischer\textsuperscript{*}} \\
	Institute of Mathematics and Computer Science\\
	University of Greifswald\\
	Walther-Rathenau-Str. 47, Greifswald, Mecklenburg-Vorpommerania, Germany \\
	\texttt{email@mareikefischer.de} \\
 \bigskip
 *Corresponding author}
\begin{document}
\maketitle

\begin{abstract}{Phylogenetic trees are frequently used to model evolution. Such trees are typically reconstructed from data like DNA, RNA, or protein alignments using methods based on criteria like maximum parsimony (amongst others). Maximum parsimony has been assumed to work well for data with only few state changes. Recently, some progress has been made to formally prove this assertion. For instance, it has been shown that each binary phylogenetic tree $T$ with $n \geq 20k$ leaves is uniquely defined by the set $A_k(T)$, which consists of all characters with parsimony score $k$ on $T$. In the present manuscript, we show that the statement indeed holds for all $n \geq 4k$, thus drastically lowering the lower bound for $n$ from $20k$ to $4k$.
However, it has been known that for $n \leq 2k$ and $k \geq 3$, it is not generally true that $A_k(T)$ defines $T$. We improve this result by showing that the latter statement can be extended from $n \leq 2k$ to $n \leq 2k+2$. 
So we drastically reduce the gap of values of $n$ for which it is unknown if trees $T$ on $n$ taxa are defined by $A_k(T)$ from the previous interval of $[2k+1,20k-1]$ to the interval $[2k+3,4k-1]$. Moreover, we close this gap completely for the nearest neighbor interchange (NNI) neighborhood of $T$ in the following sense: We show that as long as $n\geq 2k+3$, no tree that is one NNI move away from $T$ (and thus very similar to $T$) shares the same $A_k$-alignment.}
\end{abstract}

\keywords{maximum parsimony, phylogenetic tree, Buneman theorem}

\pacs{05C05 , 05-08 , 05C90 , 92B05 , 92-08}

\section{Introduction}\label{sec1}

Reconstructing evolutionary relationships between different species and ultimately even the so-called \enquote{Tree of Life} \cite{ToL}, i.e., the tree describing the relationships of all living species on earth, is one of the big goals in biology. In order to pursue this goal, mathematical tree reconstruction methods are required. Such methods usually take data in the form of aligned DNA, RNA, or protein sequences and then use an optimization criterion to return the \enquote{best} tree, i.e., the tree describing the data best according to a given criterion. One such criterion is maximum parsimony (MP): methods based on this criterion seek to find the tree with the minimal number of nucleotide substitutions (cf. \cite{Fitch1971, Semple2003}). 

However, just as other methods, MP may err, i.e., it may return the wrong tree (e.g., in the so-called \enquote{Felsenstein zone}, cf. \cite{Felsenstein1978,Felsenstein2004}), or it may be indecisive between several trees. On the other hand, it has been observed that in many cases, MP seems to work well when the number of nucleotide substitutions in the data is low \cite{Sourdis1988}. Analyzing and proving this \enquote{folklore knowledge} has inspired various mathematical manuscripts in the recent literature. In particular, the special case of the so-called $A_k(T)$ alignment plays an important role. For a given phylogenetic tree $T$, $A_k(T)$ is the set containing all binary characters that require precisely $k$ nucleotide substitutions on $T$. It is a consequence of the classic Buneman theorem \cite{Buneman1971} in mathematical phylogenetics, which is also known as \enquote{splits equivalence theorem} \cite{Semple2003}, that $A_1(T)$ defines $T$ and that $T$ can be uniquely reconstructed from $A_1(T)$ using the MP criterion. In \cite{Fischer2019}, it was shown that $T$ can be uniquely defined by $A_2(T)$ and moreover, if $n\geq 9$, uniquely reconstructed from $A_2(T)$ using MP. However, in the same manuscript it was also shown that unfortunately, $A_k(T)$ does \textit{not} generally define $T$ for $k\geq 3$ if the number of leaves $n$ of $T$ equals $2k$. On the other hand, more recently it has been shown that $T$ is indeed uniquely defined by $A_k(T)$ whenever $n \geq 20k$ \cite{Fischer2022}. 

So for all cases with $k\geq 3$, the literature so far leaves a big gap between the cases of $2k$ and $20k$: We know that in case that $n=2k$, $T$ is not necessarily defined by $A_k(T)$, but we also know that if $n \geq 20k$, it definitely is. In the present manuscript, we drastically reduce this gap: We show that any binary phylogenetic tree $T$ is uniquely defined by $A_k(T)$ whenever $n \geq 4k$. We obtain this result by exploiting both the classic version of Menger's theorem known from graph theory (cf. \cite{Menger1927,Diestel2017}) as well as a stronger and more recent version of it \cite{Boehme2001}. Thus, our result nicely links modern phylogenetics to classic graph theory, and it can be considered an important first step to proving the conjecture that whenever $k <\frac{n}{4}$, $T$ is the unique maximum parsimony tree of $A_k(T)$, which was stated by  \cite{Fischer2019}  and inspired by \cite{Goloboff2018}. Furthermore, we show that for all $k\geq 3$, there are cases of different trees with $n=2k+1$ as well as $n=2k+2$ and identical $A_k$-alignments, hence also improving the lower bound of the gap, i.e., the range of values of $n$ for which we do \textit{not} know if $A_k(T)$ defines the binary phylogenetic tree $T$ with $n$ leaves. Previously, this gap contained all values of $n$ in the interval $[2k+1,20k-1]$, and our manuscript reduces this interval to $[2k+3,4k-1]$.

Moreover, we even manage to close the gap completely within the so-called NNI neighborhood. In particular, we show that for all $n \geq 2k+3$, we have $A_k(T)\neq A_k(\widetilde{T})$ whenever $\widetilde{T}$ is an NNI neighbor of $T$.
 
In summary, our manuscript drastically reduces the gap of values of $n$ for which we do not know if binary phylogenetic trees on $n$ leaves are uniquely defined by their respective $A_k$-alignments, and we are even able to show that in the NNI neighborhood of any such tree, the gap can be closed affirmatively, in the following sense: The $A_k$-alignment of a given tree $T$ with $n\geq 2k+3$ leaves is indeed unique within its NNI neighborhood.

\section{Preliminaries}

\subsection{Definitions and notation}
Before we can state our results, we need to introduce some basic phylogenetic and graph theoretical concepts. We begin with trees and tree operations.

\subsubsection*{Phylogenetic trees}
In the following, a \textit{phylogenetic $X$-tree} $T=(V,E)$ is a connected acyclic graph without vertices of degree $2$ whose leaves (i.e., vertices whose degree is at most 1) are bijectively labeled by $X$. We may assume without loss of generality that $X=\{1,\ldots,n\}$ for $n \in \mathbb{N}_{\geq 1}$. We call $T$ \textit{binary} if every vertex has degree either $1$ or $3$. Similarly, a \textit{rooted phylogenetic $X$-tree} $T=(V,E)$ is  a connected acyclic graph whose leaves are bijectively labeled by $X$ and with exactly one designated \textit{root} vertex $\rho \in \mathring{T}$. Note that for technical reasons, in the following we will consider a tree consisting of only one vertex to be rooted, too -- in this case, the only vertex is at the same time considered to be the root and the only leaf of the tree.

Two phylogenetic $X$-trees $T=(V,E)$ and $\widetilde{T}=(\widetilde{V},\widetilde{E})$ are \textit{isomorphic} if there exists a bijection $\phi: V \rightarrow \widetilde{V}$ with $\{v,\widetilde{v}\}\in E(T) \Leftrightarrow \{\phi(v), \phi(\widetilde{v})\}\in E(\widetilde{T})$ and $\phi(x) = x$ for all $x\in X$. In other words, $\phi$ is a graph isomorphism preserving the leaf labeling. As is common in mathematical phylogenetics, whenever $T$ and $\widetilde{T}$ are isomorphic, we denote this by $T\cong\widetilde{T}$.

 \subsubsection*{Basic graph theoretical concepts}
 We need some concepts from classical graph theory. Let $G=(V,E)$ be a simple graph and let $A$ and $B$ be subsets of $V$. Then, a path $P$ connecting some vertex in $A$ with some vertex in $B$ and with no interior vertex in either $A$ or $B$ is called an \textit{$A$-$B$-path}. A subset $F$ of $E$ is called an \textit{$A$-$B$-cut set}, or simply \textit{cut set} for short whenever there is no ambiguity, if the graph $G'=(V,E \setminus F)$ resulting from $G$ when edge set $F$ gets deleted contains no $A$-$B$-path. If $\mathcal{P}$ is a collection of $A$-$B$-paths, we denote by $\mathcal{P}(A)$ the union of all sets $V(P)\cap A$  with $P\in \mathcal{P}$ and by $\mathcal{P}(B)$ the union of all sets $V(P)\cap B$ with $P\in \mathcal{P}$, i.e., $\mathcal{P}(A)$ contains all \textit{endpoints} of $P$ that lie in $A$, and $\mathcal{P}(B)$ contains all \textit{endpoints} of $P$ that lie in $B$. In the special case of a phylogenetic $X$-tree with $A=B=X$, we call an $A$-$B$-path with at least one interior vertex a \textit{leaf-to-leaf-path}.
 
 Similar to an $A$-$B$-cut set, an \textit{$A$-$B$-separator} can be defined as follows. For a subset $V^\prime \subseteq V$ we consider the graph $G^\prime=(V\setminus V^\prime,E^\prime)$ induced by $V\setminus V^\prime$, where $E^\prime \subseteq E$ contains all edges both of whose endpoints are contained in $V\setminus V^\prime$. Then, $V^\prime$ is an \textit{$A$-$B$-separator} of $G$ (or separator for short, if there is no ambiguity) for $A,B \subseteq V$, if $G^\prime$ contains no $A$-$B$-path.

\subsubsection*{Phylogenetic tree operations}
When considering various binary phylogenetic $X$-trees, it is often useful to measure their distance using a tree metric. One of the most frequently used such metrics is \textit{$d_{NNI}$}, which simply counts the minimum number of so-called \textit{nearest neighbor interchange (NNI) moves} needed to get from the first tree under consideration to the second one. An NNI move simply takes an inner edge $e$ of a binary phylogenetic $X$-tree $T$ and swaps two of the four subtrees of $T$ which we get when deleting the precisely four edges adjacent to $e$ in a way that the resulting tree is not isomorphic to $T$, i.e., in a way that changes the tree. A tree resulting from $T$ by performing one NNI move is called an \textit{NNI neighbor} of $T$, and all NNI neighbors of $T$ together with $T$ form the \textit{NNI neighborhood} of $T$. Note that this implies that we consider $T$ to belong to its own neighborhood.

Another operation often used to change a binary phylogenetic $X$-tree $T$ is a \textit{cherry reduction}. A cherry $[x,y]$ for $x,y \in X$, $x\neq y$, is a pair of leaves of $T$ adjacent to the same vertex $u$. We distinguish between a \textit{cherry reduction of type 1}, which deletes only one of the two leaves $x$ and $y$ of the cherry, without loss of generality $x$, and suppresses the resulting degree-2 vertex, and a \textit{cherry reduction of type 2}, which deletes both leaves $x$ and $y$ of the cherry as well as their unique neighbor, and subsequently suppresses the resulting degree-2 vertex. Note that other than NNI moves, which transform a binary phylogenetic $X$-tree into another binary phylogenetic $X$-tree, a cherry reduction of type 1 results in a phylogenetic $X\setminus\{x\}$-tree and a cherry reduction of type 2 results in a phylogenetic $X\setminus\{x,y\}$-tree, i.e., both cherry reductions reduce the number of taxa under consideration. However, as long as $\lvert X\rvert \geq 4$, the resulting trees will be binary, too. Such reductions are thus often used in mathematical phylogenetics in inductive proofs. Note that every binary phylogenetic tree with at least three leaves has at least one cherry (\cite[Proposition 1.2.5]{Semple2003}), so that in such trees, both types of cherry reductions can be performed.

\subsubsection*{Characters, $X$-splits and alignments}
In evolutionary biology, we often want to reconstruct phylogenetic trees from a data set on the species in question. In this regard, we often consider \textit{binary characters} $f: X \rightarrow \{a,b\}$. An \textit{extension} of such a character $f$ on a given phylogenetic $X$-tree $T=(V,E)$ is a function $g:V\rightarrow \{a,b\}$ with $g(x) = f(x)$ for $x\in X$. We call an edge $\{v,w\}$ with $g(v)\not= g(w)$ a \textit{changing edge} of $g$. By $ch(g,T)$ we denote the number of changing edges or the \textit{changing number} of $g$ on $T$.\\
Another basic notion in phylogenetics is the following: If $X=A\cup B$ is a bipartition of $X$ into two non-empty subsets, we call $\sigma=A \vert B$ an \textit{$X$-split}. The \textit{size} of an $X$-split $\sigma$ is defined by $\lvert \sigma\rvert= \min\{\lvert A\rvert,\lvert B\rvert\}$. Every $X$-split of size 1 is called \textit{trivial}. 

Note that there is a natural relationship between phylogenetic trees and $X$-splits in the sense that every phylogenetic tree $T$ on taxon set $X$ induces a collection of $X$-splits: If $e$ is an edge of $T$, removing $e$ leads to a forest consisting of two rooted phylogenetic trees $T_A$ and $T_B$ on taxon sets $A$ and $B$, respectively (cf. Fig. \ref{fig:treedecomp}). Clearly, $A\vert B$ is an $X$-split, and we call it an $X$-split \textit{induced by $T$}. We denote the set of all $X$-splits induced by $T$ by $\Sigma(T)$. It is well-known that $\vert\Sigma(T)\vert = 2n-3$ if $T$ is binary and $\vert X\vert\geq 2$ \cite[Proposition 2.1.3]{Semple2003}. Moreover, as clearly all trivial $X$-splits are contained in \textit{every} phylogenetic $X$-tree, we denote by $\Sigma^\ast(T)$ the set of all non-trivial $X$-splits induced by $T$.

Note that not only is there a connection between phylogenetic $X$-trees and $X$-splits, but there is also a connection between $X$-splits and binary characters: Given a binary character $f: X\rightarrow \{a,b\}$, let $A_f=f^{-1}(\{a\})$ and $B_f=f^{-1}(\{b\})$. Then, $A_f\vert B_f$ is an $X$-split, and we call it the $X$-split \textit{induced by $f$}. 

Given some phylogenetic tree $T$ with taxa $X$ and some binary character $f:X\rightarrow \{a,b\}$, we call $l(f,T) := \min_{g} ch(g,T)$ the \textit{parsimony score} of $f$ on $T$. Here, the minimum runs over all extensions $g$ of $f$ on $T$. Note that an extension $g$ of $f$ on $T$ with $ch(g,T)=l(f,T)$ is called \textit{most parsimonious}.

We can extend the definition of the parsimony score to \textit{alignments}, i.e., to a multiset of characters $A=\{f_1, \ldots, f_k\}: l(A, T) = \sum\limits_{i=1}^k l(f_i, T)$. 

Now, the most important concept in our manuscript is a particular alignment: For a given binary phylogenetic tree $T$, we define the alignment $A_k(T)$ as the set of all characters with parsimony score $k$ on $T$. If $A$ is an alignment, then we call it an \textit{$A_k$-alignment} if there exists some binary phylogenetic tree $T$ with $A = A_k(T)$.

\subsection{Known results}
Before we can state our own results, we need to introduce various known results both from the phylogenetic and graph theoretical literature, which we later on use to derive our own findings.

We start by considering edge-disjoint and vertex-disjoint paths and first recall the following useful lemma, which was recently proven in \cite{Fischer2022}. 
\begin{lemma}[Lemma 1 in \cite{Fischer2022}]  \label{lem:Fischer}
Let $T$ be a binary phylogenetic $X$-tree with $\vert X\vert  = n$. Then $T$ has at least $\left\lfloor\frac{n}{2} \right\rfloor$ edge-disjoint leaf-to-leaf-paths.
\end{lemma}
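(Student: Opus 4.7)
The plan is to prove the lemma by induction on $n=|X|$, peeling off one edge-disjoint leaf-to-leaf path at a time via type-$2$ cherry reductions. I would handle $n\in\{3,4,5\}$ as explicit base cases. For $n=3$ the tree has a single interior vertex $u$, and the path through $u$ joining any two leaves is a leaf-to-leaf path, supplying the required $\lfloor 3/2\rfloor = 1$ path. For $n\in\{4,5\}$, a binary phylogenetic tree always has at least two cherries with distinct parents; each cherry path has length $2$ and uses only the two leaf-incident edges at its own parent, so the two cherry paths are automatically edge-disjoint, giving $\lfloor n/2\rfloor = 2$ paths.

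For the inductive step, let $n\geq 6$ and suppose the claim holds for all smaller trees. Let $T$ be a binary phylogenetic $X$-tree with $|X|=n$. Since $n\geq 3$, $T$ contains a cherry $[x,y]$ with common neighbor $u$, and the unique other neighbor $v$ of $u$ is an interior vertex because $n\geq 4$. Let $P_0$ be the length-$2$ leaf-to-leaf path $x$-$u$-$y$. Applying a type-$2$ cherry reduction at $[x,y]$ yields a binary phylogenetic tree $T'$ on $n-2\geq 4$ leaves, which by the induction hypothesis contains $m:=\lfloor (n-2)/2\rfloor$ edge-disjoint leaf-to-leaf paths $P'_1,\dots,P'_m$. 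Every edge of $T'$ corresponds naturally to an edge of $T$ distinct from $\{x,u\},\{u,y\},\{u,v\}$, with the single exception of the edge created by suppressing $v$, which corresponds to the two-edge path $v_1$-$v$-$v_2$ in $T$ (here $v_1,v_2$ denote the neighbors of $v$ other than $u$). Lifting each $P'_i$ along this correspondence yields a leaf-to-leaf path $P_i$ of $T$ that avoids the edges $\{x,u\},\{u,y\},\{u,v\}$. Hence $P_0,P_1,\dots,P_m$ are pairwise edge-disjoint and supply the required $1+\lfloor(n-2)/2\rfloor = \lfloor n/2\rfloor$ leaf-to-leaf paths of $T$.

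The main obstacle is preventing the induction from over-consuming the tree: a naive cherry-reduction argument run all the way down would eventually produce a tree with only two leaves, which under the paper's definition admits no leaf-to-leaf path (such a path requires at least one interior vertex). Handling $n\in\{3,4,5\}$ explicitly as base cases removes this difficulty, since from $n\geq 6$ a single cherry reduction drops to $n-2\geq 4$, where the hypothesis still applies non-trivially. A secondary verification, immediate once stated, is that the cherry path $P_0$ is edge-disjoint from each lifted path $P_i$: the two edges of $P_0$ are removed entirely by the cherry reduction and therefore do not reappear in any $P'_i$.
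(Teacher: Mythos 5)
Your induction via type-$2$ cherry reductions is correct: the base cases $n\in\{3,4,5\}$ check out, the lift of the paths of $T'$ back to $T$ (re-inserting the suppressed vertex $v$ on at most one path) preserves edge-disjointness, and the cherry path $x$-$u$-$y$ uses only edges that disappear under the reduction, so the count $1+\lfloor (n-2)/2\rfloor=\lfloor n/2\rfloor$ goes through. For the record, the manuscript does not prove Lemma~\ref{lem:Fischer} itself but imports it from \cite{Fischer2022}, so there is no in-paper proof to compare against; your argument is a valid self-contained proof of the cited result (implicitly for $n\geq 3$, the only regime where the statement is meaningful under the paper's definition of a leaf-to-leaf path, which requires an interior vertex).
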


Next, we state Menger's classic theorem, which is well-known from graph theory (\cite{Menger1927}, \cite[Theorem 3.3.1]{Diestel2017}).

\begin{theorem}[Menger's theorem]\label{thm:menger3}
Let $G$ be a graph with vertex set $V$ and $A,B\subset V$. Then the minimum number of vertices needed to separate $A$ from $B$ is equal to the maximum number of vertex-disjoint $A$-$B$-paths.
\end{theorem}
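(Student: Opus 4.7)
The plan is to prove the two inequalities separately. Let $k$ denote the minimum size of an $A$-$B$-separator and $p$ the maximum size of a collection of vertex-disjoint $A$-$B$-paths. The direction $p\leq k$ is immediate: every $A$-$B$-separator must hit every $A$-$B$-path in at least one vertex, and in a family of pairwise vertex-disjoint paths no two paths can share a vertex, so any separator needs at least one distinct vertex per path.

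For the harder direction $k\leq p$, I would proceed by induction on $|E(G)|$. In the base case where $G$ has no edges, the only $A$-$B$-paths are isolated vertices in $A\cap B$, and both $k$ and $p$ equal $|A\cap B|$. For the inductive step, pick any edge $e=\{x,y\}$ and compare minimum separators in $G$ and in $G-e$. If no $A$-$B$-separator in $G-e$ has size less than $k$, the inductive hypothesis applied to $G-e$ directly yields $k$ vertex-disjoint $A$-$B$-paths, and these remain paths in $G$. Otherwise there is a separator $S$ in $G-e$ with $|S|=k-1$, and one first checks that both $S\cup\{x\}$ and $S\cup\{y\}$ must then be minimum $A$-$B$-separators in $G$ of size exactly $k$. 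I would then split $G$ into the ``$A$-side'' subgraph $G_1$ induced by all vertices reachable from $A$ without crossing $S\cup\{x\}$ (so that $S\cup\{x\}$ serves as a new target set) and the ``$B$-side'' subgraph $G_2$ defined symmetrically for $S\cup\{y\}$. Both $G_1$ and $G_2$ have strictly fewer edges than $G$, and in each of them the minimum separator between the old set and the new one still equals $k$; hence induction produces $k$ disjoint paths on each side, which can then be concatenated through the common vertices of $S$ and through the edge $e$ to form $k$ disjoint $A$-$B$-paths in $G$.

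The main obstacle is the bookkeeping in the inductive step: one has to verify that $S\cup\{x\}$ and $S\cup\{y\}$ really are minimum $A$-$B$-separators in $G$ (otherwise $|S|=k-1$ could be improved, contradicting minimality in $G$), that the two subproblems are strictly smaller in edge count (which is why one deletes an edge rather than a vertex), and that the paths produced on each side meet $S\cup\{x,y\}$ in exactly one vertex so that concatenation does not create vertex collisions. An alternative and arguably cleaner route would be to reduce the statement to the max-flow min-cut theorem: split each vertex $v$ into $v^-$ and $v^+$ joined by a unit-capacity arc, add a super-source feeding $A$ and a super-sink drawing from $B$, and invoke integrality of maximum flow. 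Under this reduction, minimum cuts correspond bijectively to minimum $A$-$B$-separators, and integral unit flows decompose into the desired vertex-disjoint $A$-$B$-paths.
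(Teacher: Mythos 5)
The paper does not prove this statement at all: it is quoted as Menger's classical theorem with citations to \cite{Menger1927} and \cite[Theorem 3.3.1]{Diestel2017}, so there is no in-paper argument to compare against. Your sketch is essentially the standard ``first proof'' of Menger's theorem from Diestel (induction on the number of edges), and it is sound in outline: the easy inequality $p\leq k$ is argued correctly, the base case is right, and the case split on whether $G-e$ admits a smaller separator is the correct pivot. Two points would need tightening if you were to write it out. First, as you define them, the induced subgraphs $G_1$ and $G_2$ need not have strictly fewer edges than $G$ (e.g.\ $y$ may itself be reachable from $A$ without crossing $S\cup\{x\}$, so $e$ survives into $G_1$); the clean fix is to apply the induction hypothesis to $G-e$ with the target sets $A$ and $S\cup\{x\}$ (respectively $S\cup\{y\}$ and $B$) rather than to induced subgraphs of $G$. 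Second, the claim that the minimum $A$--$(S\cup\{x\})$ separator in the subproblem still has size $k$ requires the (short but non-obvious) verification that any such separator is already an $A$--$B$ separator in $G$; this is where the case analysis over whether an $A$--$B$ path uses $e$ actually lives. Your alternative via vertex splitting and max-flow min-cut is also a legitimate complete route. Since the paper treats this result as a black box, either finished version of your argument would be a self-contained addition rather than a divergence from the paper's method.
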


The following classic result from mathematical phylogenetics is based on Menger's theorem.

\begin{proposition}[Corollary 5.1.8 in \cite{Semple2003}]\label{prop:menger0}
Let $T$ be a binary phylogenetic $X$-tree, and let $f: X\rightarrow \{a,b\}$ be a binary character. Then $l(f,T)$ is equal to the maximum number of edge-disjoint $A_f$-$B_f$-paths in $T$.
\end{proposition}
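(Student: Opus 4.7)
The plan is to establish the two inequalities $l(f,T) \geq m$ and $l(f,T) \leq m$ separately, where $m$ denotes the maximum number of edge-disjoint $A_f$-$B_f$-paths in $T$.

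For the easy direction $l(f,T) \geq m$, I would fix a collection $\mathcal{P}$ of $m$ edge-disjoint $A_f$-$B_f$-paths together with any most parsimonious extension $g$ of $f$ on $T$. For each $P \in \mathcal{P}$, the endpoints of $P$ lie in $A_f$ and $B_f$ respectively, so $g$ assigns them different states; walking along $P$ from one endpoint to the other, at least one edge must therefore be a changing edge of $g$. Since the paths in $\mathcal{P}$ are pairwise edge-disjoint, the changing edges produced in this way are all distinct, and hence $l(f,T) = ch(g,T) \geq |\mathcal{P}| = m$.

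For the converse inequality $l(f,T) \leq m$, I would invoke the edge version of Menger's theorem (which follows from Theorem~\ref{thm:menger3} by the standard subdivision reduction, cf.\ also \cite{Diestel2017}), yielding that $m$ equals the minimum cardinality of an $A_f$-$B_f$ edge cut set in $T$. Let $F$ be a minimum such cut. Since $T$ is a tree, deleting $F$ produces a forest whose components cannot contain leaves from both $A_f$ and $B_f$, for otherwise a leaf-to-leaf-path connecting them would survive and contradict the fact that $F$ is a cut. I would then define an extension $g: V(T) \to \{a,b\}$ by coloring every vertex of a component that meets $A_f$ with $a$, every vertex of a component that meets $B_f$ with $b$, and the vertices of the (at most finitely many) remaining components arbitrarily. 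By construction $g$ extends $f$, and every changing edge of $g$ must lie in $F$, so $l(f,T) \leq ch(g,T) \leq |F| = m$.

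The main obstacle is the transition from the vertex form of Menger in Theorem~\ref{thm:menger3} to the edge form invoked above. I would handle this by appealing to the standard reduction: subdivide each edge of $T$ by a new middle vertex and apply Theorem~\ref{thm:menger3} with the middle vertices playing the role of separating objects, so that vertex-disjoint paths in the subdivision correspond exactly to edge-disjoint paths in $T$. A self-contained alternative, avoiding Menger altogether, would be an induction on $|X|$ via cherry reductions (Proposition~1.25 in \cite{Semple2003}): if a cherry $[x,y]$ satisfies $f(x) = f(y)$, a cherry reduction of type~1 preserves both $l(f,T)$ and $m$, while if $f(x) \neq f(y)$, one can peel off a short $A_f$-$B_f$-path through $x$ and $y$ and show that both quantities drop by exactly one. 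Either route yields the desired equality $l(f,T) = m$.
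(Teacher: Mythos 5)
The paper does not actually prove this statement: it is quoted verbatim as Corollary 5.1.8 of \cite{Semple2003} and used as a black box, so there is no in-paper argument to compare against. Your proof is essentially the standard textbook derivation and is correct in substance. The lower bound $l(f,T)\geq m$ (one changing edge per path, distinct by edge-disjointness) is clean. The upper bound correctly combines two ingredients: that $l(f,T)$ is at most the minimum $A_f$-$B_f$ edge cut (your component-colouring extension, which is in effect a proof of the paper's Proposition~\ref{prop:menger1}), and the edge form of Menger's theorem to identify that minimum cut with $m$. Two small caveats. First, the ``standard subdivision reduction'' from Theorem~\ref{thm:menger3} to the edge version is not quite as automatic as you suggest: a minimum vertex separator in the subdivided graph may use original (degree-3) vertices rather than only middle vertices, so it does not immediately translate into an edge cut of the same size; the usual reduction goes through the line graph or max-flow/min-cut. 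In the present setting you can sidestep this entirely, since for binary trees with leaf endpoints edge-disjoint and internally vertex-disjoint paths coincide (the paper's Corollary~\ref{cor:edge=vertex}), but as written the remark is imprecise. Second, in your sketched induction alternative, the case $f(x)\neq f(y)$ --- showing that the maximum number of edge-disjoint paths drops by \emph{exactly} one after removing the cherry --- is precisely the step that requires a maximum path system containing the path through the cherry, i.e.\ the strengthened Menger argument the paper uses in Lemma~\ref{lem:cherryred1}(2); so ``either route'' is optimistic for that alternative as sketched. The primary route, however, stands.
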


From Proposition \ref{prop:menger0}, one can easily derive the following corollary, which is based on the fact that in \textit{binary} trees, the notions of \enquote{edge-disjoint} and \enquote{vertex-disjoint} coincide when considering leaf-to-leaf paths. 

\begin{corollary}\label{cor:edge=vertex}
    Let $T$ be a binary phylogenetic $X$-tree, and let $f: X\rightarrow \{a,b\}$ be a binary character. Then $l(f,T)$ is equal to the maximum number of vertex-disjoint $A_f$-$B_f$-paths in $T$.
\end{corollary}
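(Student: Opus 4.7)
The plan is to deduce the statement directly from Proposition \ref{prop:menger0} by showing that, in a binary phylogenetic $X$-tree, any collection of edge-disjoint $A_f$-$B_f$-paths is automatically also vertex-disjoint. Once this is established, the two maxima coincide and the corollary follows.

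First I would observe that every $A_f$-$B_f$-path has both endpoints in $A_f \cup B_f = X$, hence at leaves of $T$, and by the definition of an $A$-$B$-path no interior vertex lies in $X$. Thus all interior vertices of such a path are internal vertices of $T$, which by binarity have degree exactly $3$. The trivial direction is that any collection of vertex-disjoint paths is edge-disjoint, so the maximum number of vertex-disjoint $A_f$-$B_f$-paths is at most $l(f,T)$ by Proposition \ref{prop:menger0}.

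For the reverse inequality I would take any collection $\mathcal{P}$ of edge-disjoint $A_f$-$B_f$-paths of maximum size $l(f,T)$ and argue that no two paths in $\mathcal{P}$ share a vertex. Suppose for contradiction that two paths $P_1, P_2 \in \mathcal{P}$ share a vertex $v$. If $v$ is a leaf, then $v$ has degree $1$, so the unique edge incident to $v$ is used by both $P_1$ and $P_2$, contradicting edge-disjointness. If $v$ is an interior vertex of both $P_1$ and $P_2$, then each of $P_1$ and $P_2$ uses two of the three edges incident to $v$, forcing at least one shared edge, again a contradiction. The remaining case (one path using $v$ as an endpoint while the other passes through $v$) can be ruled out identically, since $v$ being an endpoint means $v \in X$, so $v$ is a leaf, bringing us back to the first case. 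Hence $\mathcal{P}$ is vertex-disjoint, and the maximum number of vertex-disjoint $A_f$-$B_f$-paths is at least $l(f,T)$.

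Combining both inequalities with Proposition \ref{prop:menger0} gives the desired equality. I do not anticipate any real obstacle here; the only thing to be slightly careful about is the degree bookkeeping at internal vertices, which is immediate from the binarity of $T$.
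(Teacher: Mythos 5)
Your proof is correct and follows essentially the same route as the paper: both reduce the corollary to Proposition \ref{prop:menger0} by observing that in a graph of maximum degree $3$ (with path endpoints at leaves), edge-disjoint $A_f$-$B_f$-paths are automatically vertex-disjoint. You merely spell out the degree-counting case analysis that the paper leaves implicit.
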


\begin{proof} Note that for graphs with maximum degree $3$, and thus in particular for binary phylogenetic trees, we have that two paths $P_1,P_2$ are edge-disjoint only if there is no vertex which is internal vertex in $P_1$ as well as in $P_2$. Thus, the statement follows directly from Proposition \ref{prop:menger0} (using the fact that the endpoints of $A_f$-$B_f$-paths are leaves).
\end{proof}

The argument used in the proof of Corollary \ref{cor:edge=vertex}  implies that -- as we are only considering leaf-to-leaf paths in binary phylogenetic trees in this manuscript unless stated otherwise -- whenever one of the terms \enquote{edge-disjoint} or \enquote{vertex-disjoint} is used, it can replaced with the other one. In particular, whenever a version of Menger's theorem is used to obtain some statement on vertex-disjoint leaf-to-leaf-paths, one automatically obtains an analogous statement on edge-disjoint leaf-to-leaf-paths and vice versa.\\

The following statement, which we need to prove our own results, is a stronger version of Menger's classic theorem. 

\begin{proposition}[adapted from \cite{Boehme2001,Diestel2017}] \label{prop:menger2} 
Let $G$ be a graph with vertex set $V$. Consider some subsets $A$ and $B$ of $V$ such that $A$ cannot be separated from $B$ by a set of fewer than $j$ vertices.  Let $\mathcal{P}$ be a set of $i$ vertex-disjoint $A$-$B$-paths in $G$ with $i<j$ with endpoints $\mathcal{P}(A)$ in $A$ and $\mathcal{P}(B)$ in $B$. Then there exists a set $\mathcal{Q}$ of $j$ vertex-disjoint $A$-$B$-paths in $G$ with $\mathcal{P}(A) \subset \mathcal{Q}(A)$ and $\mathcal{P}(B) \subset \mathcal{Q}(B)$, where $\mathcal{Q}(A)$ denotes the endpoints of $\mathcal{Q}$ in $A$ and $\mathcal{Q}(B)$ denotes the endpoints of $\mathcal{Q}$ in $B$.
\end{proposition}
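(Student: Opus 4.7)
I would prove the proposition by induction on the deficit $j - i$. The base case $j = i$ is handled trivially by setting $\mathcal{Q} = \mathcal{P}$. For the inductive step it suffices to establish a one-step augmentation claim: if $\mathcal{P}$ is a set of $i < j$ vertex-disjoint $A$-$B$-paths, then there exists a set $\mathcal{P}'$ of $i+1$ vertex-disjoint $A$-$B$-paths with $\mathcal{P}(A) \subseteq \mathcal{P}'(A)$ and $\mathcal{P}(B) \subseteq \mathcal{P}'(B)$; applying this step $j - i$ times and invoking the induction hypothesis yields the required $\mathcal{Q}$.

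For the one-step augmentation I would cast the question into the standard max-flow framework. Build a directed vertex-capacitated network $N$ from $G$ by splitting each $v \in V$ into $v^-, v^+$ joined by a unit-capacity arc $v^- \to v^+$, replacing each undirected edge $\{u,v\}$ by two unit-capacity arcs $u^+ \to v^-$ and $v^+ \to u^-$, adding a super-source $s$ with unit-capacity arcs $s \to v^-$ for every $v \in A$, and adding a super-sink $t$ with unit-capacity arcs $v^+ \to t$ for every $v \in B$. Then the collection $\mathcal{P}$ lifts to an integer $s$-$t$-flow $\varphi$ of value $i$ which saturates the arc $s \to v^-$ for every $v \in \mathcal{P}(A)$ and the arc $u^+ \to t$ for every $u \in \mathcal{P}(B)$. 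Minimum $s$-$t$-cuts in $N$ correspond bijectively to minimum vertex separators of $A$ from $B$ in $G$, so by hypothesis and the classical Menger theorem (Theorem~\ref{thm:menger3}) the minimum $s$-$t$-cut of $N$ has value at least $j$. In particular $\varphi$ is not a maximum flow, hence an augmenting $s$-$t$-path in the residual network of $\varphi$ exists, and augmenting along it produces an integer flow of value $i+1$ that decomposes into $i+1$ vertex-disjoint $A$-$B$-paths in $G$.

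The main obstacle is to ensure the augmentation does not destroy any of the prescribed endpoints: a naive augmenting path could traverse a reverse arc $v^- \to s$ for some $v \in \mathcal{P}(A)$ (or $t \to u^+$ for some $u \in \mathcal{P}(B)$), desaturating that arc and eliminating $v$ (respectively $u$) from the endpoint set of the new path system. To rule this out I would restrict the search to the subnetwork of the residual obtained by deleting precisely these \enquote{dangerous} reverse arcs. If no augmenting path existed in this restricted residual, then an $s$-$t$-cut in it would pull back to a vertex separator of $A$ from $B$ in $G$ of size strictly less than $j$, contradicting the hypothesis. Hence the restricted augmenting path exists, and augmenting along it preserves the flow on every arc $s \to v^-$ with $v \in \mathcal{P}(A)$ and every arc $u^+ \to t$ with $u \in \mathcal{P}(B)$, yielding a set $\mathcal{P}'$ of the desired form and completing the inductive step.
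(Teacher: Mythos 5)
Your high-level strategy (iterate a one-step augmentation, realized via a unit-vertex-capacity flow network) is sound, and it is worth noting that the paper itself offers no proof of this proposition at all: it cites the one-step version ($i\to i+1$ preserving endpoints) from B\"ohme et al.\ and Diestel and remarks that the stated version follows by iteration. So you are attempting strictly more than the paper, which is welcome. However, your one-step argument has a gap, and the obstacle you single out is not the real one. A simple augmenting path starts at $s$, ends at $t$, and visits no vertex twice, so it can never traverse an arc into $s$ or out of $t$; the arcs $s\to v^-$ with $v\in\mathcal{P}(A)$ and $u^+\to t$ with $u\in\mathcal{P}(B)$ are therefore automatically untouched, and your restricted residual network changes nothing.

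The real problem is that your network keeps the arcs $u^+\to v^-$ for \emph{every} edge, including edges entering vertices of $A$ or leaving vertices of $B$. An augmenting path may then reroute an existing flow path through a vertex of $A$, so the augmented flow decomposes into vertex-disjoint paths that are not $A$-$B$-paths in the paper's sense (no interior vertex in $A$ or $B$), and truncating such a path to a genuine $A$-$B$-path can destroy exactly the prescribed endpoint you are trying to preserve. Concretely, take $V=\{a,w,c,p_1,p_2,q,r,b,b'\}$ with edges $\{a,p_1\},\{p_1,p_2\},\{p_2,b\},\{w,q\},\{q,p_2\},\{p_1,c\},\{c,r\},\{r,b'\}$, $A=\{a,w,c\}$, $B=\{b,b'\}$, so that $j=2$ (the separator $\{p_2,r\}$ is minimum), and let $\mathcal{P}$ consist of the single path $a,p_1,p_2,b$. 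The augmenting path $s,w^-,w^+,q^-,q^+,p_2^-,p_1^+,c^-,c^+,r^-,r^+,b'^-,b'^+,t$ (using the backward residual arc $p_2^-\to p_1^+$) is simple and avoids all of your \enquote{dangerous} arcs, yet the augmented flow decomposes into $w,q,p_2,b$ and $a,p_1,c,r,b'$; the latter has the interior vertex $c\in A$, and truncating it to the $A$-$B$-path $c,r,b'$ loses the prescribed endpoint $a$. The standard fix is to build the network so that $A$-vertices are pure sources and $B$-vertices are pure sinks: delete every arc into $v^-$ for $v\in A$ except $s\to v^-$, and every arc out of $v^+$ for $v\in B$ except $v^+\to t$. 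Then all flow paths are genuine $A$-$B$-paths, the minimum cut still equals the minimum separator size (a vertex set meets every $A$-$B$-path if and only if it meets every $A$-to-$B$ path), and the rest of your argument goes through.
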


Note that actually, the version of Proposition \ref{prop:menger2} proven in \cite{Boehme2001} as well as in \cite[Chapter 3.3, $2^{nd}$ proof of Menger's theorem]{Diestel2017} is slightly weaker than the version stated here. In particular, the authors used the case of $i$ vertex-disjoint $A$-$B$-paths to derive the case of $i+1$ vertex-disjoint $A$-$B$-paths. However, the above stated stronger version can be easily derived by an iteration of the arguments used to derive the case $i+1$. 

\par \vspace{0.3cm}

We are now in the position to consider alignment $A_k(T)$, which is one of the most important concepts of the present manuscript. Here we recall the following result, which we seek to generalize in the present manuscript.

\begin{proposition}[adapted from Corollary 1 \cite{Fischer2022} and Proposition 1 in \cite{Fischer2019}]\label{prop:A1A2} Let $k \in \{1,2\}$. Let $T$ and $\widetilde{T}$ be two binary phylogenetic $X$-trees. Then, $T \cong \widetilde{T}$ if and only if $A_k(T)=A_k(\widetilde{T})$.
\end{proposition}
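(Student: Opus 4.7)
The forward direction is immediate: any label-preserving isomorphism $\phi:T\to\widetilde{T}$ sends every extension $g$ of a character $f$ on $T$ to an extension $g\circ\phi^{-1}$ of $f$ on $\widetilde{T}$ with the same changing number, so $l(f,T)=l(f,\widetilde{T})$ for every binary character $f$ and hence $A_k(T)=A_k(\widetilde{T})$ for every $k$. All of the content is in the converse, which I would handle separately for $k=1$ and $k=2$.

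For $k=1$, Proposition \ref{prop:menger1} implies that $l(f,T)=1$ exactly when a single edge $e\in E(T)$ forms an $A_f$-$B_f$-cut set, i.e.\ when the split $A_f\vert B_f$ induced by $f$ coincides with the split induced by $e$ on $T$. Hence the collection of splits $\{A_f\vert B_f:f\in A_1(T)\}$ is precisely $\Sigma^\ast(T)$ (the trivial splits can be adjoined trivially). The Buneman splits-equivalence theorem then guarantees that $\Sigma(T)$ determines $T$ up to isomorphism, so $A_1(T)=A_1(\widetilde{T})$ forces $\Sigma(T)=\Sigma(\widetilde{T})$ and therefore $T\cong\widetilde{T}$.

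For $k=2$, Proposition \ref{prop:menger1} gives that $l(f,T)=2$ iff some pair of edges $\{e_1,e_2\}\subseteq E(T)$ is a minimum $A_f$-$B_f$-cut set. Removing $e_1$ and $e_2$ disconnects $T$ into three subtrees lying along a path; since only two colour classes are available and both endpoints of each cut edge must receive different colours, the middle subtree is assigned one colour and the two outer subtrees share the other. Thus each score-$2$ character corresponds to an (unordered) pair of edges together with a choice of middle component, and this combinatorial data is what $A_2(T)$ records. My plan is to reduce to the $k=1$ case by showing that $\Sigma(T)$ can be recovered from $A_2(T)$: given any edge $e$ of $T$, pairing $e$ with a pendant edge incident to a leaf $x$ produces a score-$2$ character whose split is obtained from the split induced by $e$ by moving $x$ to the other side; by ranging $x$ over $X$ and comparing the resulting family of score-$2$ splits, one can reconstruct the split induced by $e$ itself, and hence all of $\Sigma(T)$. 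Alternatively, and perhaps more cleanly, one can proceed inductively on $n$: characterize the cherries $[x,y]$ of $T$ in terms of membership of specific characters in $A_2(T)$ (e.g.\ characters assigning a single colour to exactly $\{x,y\}$), apply a type-$2$ cherry reduction simultaneously to $T$ and $\widetilde{T}$, verify that $A_2$ of the reduced tree is determined by $A_2(T)$, and close the induction via a small base case (which can be checked by direct enumeration of binary $X$-trees for $|X|\leq 6$).

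The main obstacle is the $k=2$ case, and within it the combinatorial bookkeeping required to identify cherries purely from $A_2(T)$ and to track how $A_2$ transforms under cherry reduction. Once that identification is established, both the direct recovery of $\Sigma(T)$ and the inductive argument become essentially routine, and the forward direction combined with the $k=1$ argument finishes the proof.
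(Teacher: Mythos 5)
Your forward direction and your $k=1$ argument are correct and coincide with what the paper itself does: the paper offers no proof of Proposition~\ref{prop:A1A2} beyond observing that $k=1$ is the Buneman theorem \cite{Buneman1971} and that $k=2$ is established in \cite{Fischer2019}. Note one small slip: the splits of characters in $A_1(T)$ form all of $\Sigma(T)$, not just $\Sigma^\ast(T)$, since trivial splits also have parsimony score $1$; this does not affect the conclusion.

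The $k=2$ direction, however, is where the entire content of the proposition lies, and your proposal defers exactly the step that carries that content. In your first route, the assertion that ``by ranging $x$ over $X$ and comparing the resulting family of score-$2$ splits, one can reconstruct the split induced by $e$'' is not yet an argument: from the bare set $A_2(T)$ you do not know which score-$2$ splits arise from an edge paired with a pendant edge (pairs of internal edges also contribute score-$2$ characters), nor which of them belong to a common ``family'' associated with one edge $e$, and you must show that this bookkeeping can be carried out in a tree-independent way, i.e., that two non-isomorphic trees cannot produce the same total collection of such splits. That uniqueness claim is precisely the theorem being proved, so the sketch is circular as it stands. Your second, inductive route has the same character: to reduce $T$ and $\widetilde{T}$ simultaneously you must first prove that they share a cherry, which requires showing that the cherries of $T$ are determined by $A_2(T)$ (this is in fact true --- for $n\geq 5$, $[x,y]$ is a cherry iff the character with $f^{-1}(a)=\{x,y\}$ has score $1$ and hence is absent from $A_2(T)$ --- but you do not say it), and you must then establish the transfer statements analogous to Lemmas~\ref{lem:cherryred1} and~\ref{lem:cherryred2} together with a correct base case; none of this is ``essentially routine,'' as the delicacy of the $k=2$ analysis in \cite{Fischer2019} (where, for instance, the related reconstruction-by-parsimony statement fails for $n<9$) indicates. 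In short: the structure of your plan is reasonable and close in spirit to the argument of \cite{Fischer2019}, but the proof of the $k=2$ converse is missing rather than merely abbreviated.
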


Note that the case of $k=1$ is a direct consequence of the well-known Buneman theorem, which states that $T\cong\widetilde{T}$ if and only if $\Sigma(T)=\Sigma(\widetilde{T})$  \cite{Buneman1971} (see also \cite[Theorem 3.1.4]{Semple2003}), whereas the case $k=2$ can be derived from the first case \cite{Fischer2019}.

We are now finally in a position to turn our attention to new results.

\section{Results}
The aim of this section is threefold: First, we state our main result, which is a generalization of Proposition \ref{prop:A1A2} to all $3\leq k \leq \frac{n}{4}$, or, equivalently, to all $n\geq 4k$ for $k \geq 3$.  

Our second result, however, has a somewhat different flavor: In this setting, we do not analyze the entire space of binary phylogenetic $X$-trees to see if for a given such tree $T$, there is another tree $\widetilde{T}$ with $A_k(T)=A_k(\widetilde{T})$. Instead, we investigate the NNI neighborhood of $T$ and present a lower bound on $n$ such that we can guarantee  $A_k(T)\neq A_k(\widetilde{T}) $ if $\widetilde{T}$ lies in this neighborhood. While this second result is doubtlessly of relevance in its own right, we can also use it to prove our third main result, namely that if $n= 2k+1$ or $n=2k+2$ and $k\geq 3$, there are pairs of binary  phylogenetic $X$-trees, where $X=\{1,\ldots,n\}$, which share the same $A_k$-alignments. Note that it has already been known that there are such cases with $k=\frac{n}{2}$, i.e., $n=2k$, \cite{Fischer2019}, so our new result shows that this problem can still occur for smaller values of $k$ or larger values of $n$. 

So together, our results reduce the \enquote{gap} in the literature quite significantly: Before, it was only known that if $n \leq 2k$, $A_k(T)$ does not necessarily define $T$ \cite{Fischer2019}, and that if $n \geq 20k$, $A_k(T)$ indeed does define $T$ \cite{Fischer2022}, leaving the cases $2k+1\leq n \leq 20k-1$ open. Now, our results narrow this gap down to  $2k+3\leq n \leq 4k-1$ and close it completely for pairs $T$ and $\widetilde{T}$, where  $\widetilde{T}$ is from the NNI neighborhood of a given tree $T$.

We begin with our main result.

\subsection{An extension of Proposition \ref{prop:A1A2}}

It is the main aim of the present manuscript to extend Proposition \ref{prop:A1A2} to other values of $k$. However, it has already been known for some time that the statement does not generally hold for $k\geq 3$. For instance, if the number $n$ of leaves equals $2k$, it is known that there are pairs of trees which share the same $A_k$-alignment \cite{Fischer2019}. 

However, it was recently shown in \cite[Theorem 3]{Fischer2022} that if $n \geq 20k$, $A_k(T)$ defines $T$ in the sense that in this case $T\cong\widetilde{T}$ if and only if $A_k(T)=A_k(\widetilde{T})$ (where $T$ and $\widetilde{T}$ are binary phylogenetic $X$-trees with $\vert X \vert = n \geq 20 k$). 
In the present section, we will drastically improve this lower bound from $n\geq 20k$ to $n\geq 4k$ as stated by the following theorem. 

\begin{theorem} \label{thm:characterization}
Let $k\in \mathbb{N}_{\geq 1}$ and let $n\in \mathbb{N}_{\geq 4k}$. Let $T$ and $\widetilde{T}$ be two binary phylogenetic $X$-trees with $\vert X\vert =n$. Then, $T\cong\widetilde{T}$ if and only if $A_k(T) = A_k(\widetilde{T})$.
\end{theorem}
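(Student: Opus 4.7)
The ($\Leftarrow$) direction is immediate: any label-preserving isomorphism $T\to\widetilde T$ maps each extension of a binary character $f$ on $T$ bijectively to an extension on $\widetilde T$ with the same number of changing edges, so $l(f,T) = l(f,\widetilde T)$ for every $f$, and thus $A_k(T) = A_k(\widetilde T)$. My plan for the converse is to argue the contrapositive: assuming $T \not\cong \widetilde T$, I construct a binary character $f$ with $l(f,T) = k$ but $l(f,\widetilde T) \geq k+1$, which exhibits $f \in A_k(T)\setminus A_k(\widetilde T)$ and contradicts $A_k(T) = A_k(\widetilde T)$. By the $k=1$ case of Proposition \ref{prop:A1A2} (the Buneman theorem), $T \not\cong \widetilde T$ yields a non-trivial split $\sigma = A\mid B \in \Sigma^*(T)\setminus\Sigma^*(\widetilde T)$; without loss of generality $\lvert A\rvert \leq \lvert B\rvert$, so $\lvert B\rvert \geq \lceil n/2\rceil \geq 2k$.

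The construction is an ``inflation'' of the character $f_\sigma$ induced by $\sigma$ (which already satisfies $l(f_\sigma,T) = 1$ and $l(f_\sigma,\widetilde T)\geq 2$ by Proposition \ref{prop:menger1}). I apply Lemma \ref{lem:Fischer} to the component $T_B$ of $T$ obtained by removing the $\sigma$-edge, producing $\lfloor\lvert B\rvert/2\rfloor\geq k$ edge-disjoint leaf-to-leaf paths inside $T_B$. From $k-1$ of these paths I extract one endpoint each, forming a set $S\subset B$ with $\lvert S\rvert = k-1$, and define $f$ by flipping the label of every leaf in $S$ from $b$ to $a$; so $A_f = A\cup S$ and $B_f = B\setminus S$. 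On $T$, the $\sigma$-edge together with the $k-1$ pendant edges incident to leaves in $S$ form an $A_f$-$B_f$-cut set of size $k$, so $l(f,T)\leq k$ by Proposition \ref{prop:menger1}; conversely, every $A_f$-$B_f$-cut must separately break each of the $k-1$ chosen edge-disjoint paths from $S$ to $B\setminus S$ inside $T_B$ and additionally disconnect $A$ from $B\setminus S$ across the $\sigma$-edge, which together force $l(f,T)\geq k$. Hence $l(f,T) = k$ and $f\in A_k(T)$.

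For $\widetilde T$ the goal is to find $k+1$ vertex-disjoint $A_f$-$B_f$-paths, forcing $l(f,\widetilde T)\geq k+1$ by Corollary \ref{cor:edge=vertex}. Since $\sigma\notin\Sigma(\widetilde T)$, the sets $A$ and $B$ cannot be separated in $\widetilde T$ by a single vertex (Corollary \ref{cor:edge=vertex} applied to $f_\sigma$), so by Theorem \ref{thm:menger3} there are already two vertex-disjoint $A$-$B$-paths in $\widetilde T$. I then plan to invoke the endpoint-preserving strong Menger theorem (Proposition \ref{prop:menger2}) iteratively, introducing the leaves of $S$ one at a time as fresh $A_f$-side endpoints of the current vertex-disjoint path system; each application of Proposition \ref{prop:menger2} keeps the previously constructed paths intact and adds one new path through the next leaf of $S$. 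After $k-1$ such extensions, I obtain $k+1$ vertex-disjoint $A_f$-$B_f$-paths in $\widetilde T$, yielding the desired $l(f,\widetilde T)\geq k+1$.

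The principal obstacle I foresee is verifying the hypothesis of Proposition \ref{prop:menger2} at each of the $k-1$ iterations: in order to add a new path, one needs the current minimum $A$-$B$-separator in $\widetilde T$ (with respect to the partially relabelled character) to exceed the current path count. The naive bound inherited from $\sigma\notin\Sigma(\widetilde T)$ only gives $\geq 2$; improving it to $\geq k+1$ requires a careful argument that the leaves of $S$, although chosen along edge-disjoint paths inside $T_B$, sit in $\widetilde T$ in positions that contribute genuinely new separator vertices step by step. This is precisely the interplay between the tree geometries of $T$ and $\widetilde T$ that the endpoint-preserving strong Menger theorem of \cite{Boehme2001} is designed to exploit, and it is what I expect to bring the bound down from the $n\geq 20k$ of \cite{Fischer2022} to the claimed $n\geq 4k$; resolving it carefully will be the technically heaviest part of the proof.
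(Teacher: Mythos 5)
Your forward direction is fine, and the overall shape (exhibit a character separating the two $A_k$ sets, using a split in $\Sigma(T)\setminus\Sigma(\widetilde T)$ and Menger-type results) matches the paper. But your construction runs the argument in the opposite direction from the paper's, and both of its quantitative claims have genuine gaps. The paper picks $\sigma$ of \emph{minimal} size, builds the $k$ edge-disjoint paths inside $\widetilde T$ (the tree \emph{lacking} the split, exploiting the path $\rho_1,\beta_1,\dots,\beta_m,\rho_2$ that replaces the missing split edge), and produces $f\in A_k(\widetilde T)\setminus A_k(T)$; the inequality it needs on the other tree is the easy one, $l(f,T)\le k-1$, obtained from a crude extension. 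You instead choose the flipped leaves $S$ from paths inside $T_B$ and must then bound $l(f,\widetilde T)$ from \emph{below} by $k+1$ --- and you have no control over where the leaves of $S$ sit in $\widetilde T$, because $S$ was chosen using only the geometry of $T$.

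Concretely: (i) Your lower bound $l(f,T)\ge k$ already fails as stated. Take $T$ a caterpillar with cherry $[1,2]$, $A=\{1,2\}$, spine vertices $w_3,w_4,\dots$, and the edge-disjoint paths $3\text{--}4$, $5\text{--}6,\dots$ in $T_B$ with $S=\{3,5\}$ (so $k=3$). The two edges $\{w_3,w_4\}$ and the pendant edge at leaf $5$ form an $A_f$-$B_f$-cut set of size $2$: the edge $\{w_3,w_4\}$ simultaneously breaks the path $3\text{--}4$ \emph{and} separates $A$ from $B\setminus S$, so these requirements are not ``separate'' as you assert, and $l(f,T)=2<k$. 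This is repairable by choosing the endpoints in $S$ more carefully (the paper's Lemma~\ref{lem:leafchoice} exists for exactly this kind of endpoint control), but it is not automatic. (ii) The claim $l(f,\widetilde T)\ge k+1$ is not just unproven --- you explicitly defer it as ``the technically heaviest part'' --- it is generally false. Flipping one leaf changes a parsimony score by at most $1$, so $l(f,\widetilde T)\le l(f_\sigma,\widetilde T)+(k-1)$, and $l(f_\sigma,\widetilde T)$ may be exactly $2$; you therefore need \emph{every one} of the $k-1$ flips to raise the score on $\widetilde T$ by exactly $1$. That fails whenever, say, two leaves of $S$ form a cherry of $\widetilde T$ (one cut edge above that cherry handles both), giving $l(f,\widetilde T)\le k$, and nothing in your choice of $S$ prevents this; even the weaker statement you actually need, $l(f,\widetilde T)\ne k$, can fail. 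The iterated use of Proposition~\ref{prop:menger2} cannot rescue this, since that proposition only extends a path system when the minimum separator is already known to be large --- which is precisely the hypothesis you cannot verify. The fix is essentially to do what the paper does: build the path system in $\widetilde T$ rather than in $T$, so that the tight equality $l(f,\widetilde T)=k$ is engineered directly and only the easy upper bound is needed on $T$.
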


Before we can prove this  theorem, we first need to state and prove the following useful lemma, which is a simple extension of Lemma \ref{lem:Fischer}.

\begin{lemma}\label{lem:leafchoice}
Let $T$ be a binary phylogenetic tree on taxon set $X$ with $\vert X \vert = n\geq 2$. Let $x\in X$. Then, $T$ contains $p=\left\lfloor\frac{n}{2}\right\rfloor$ edge-disjoint leaf-to-leaf paths $P_1,\ldots,P_{p}$ such that $x$ is an endpoint of $P_1$. 
\end{lemma}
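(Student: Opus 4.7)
The plan is to proceed by induction on $n = |X|$. The base cases $n \in \{2,3\}$ are immediate because $p = \lfloor n/2 \rfloor = 1$, and any leaf-to-leaf path of $T$ having $x$ as an endpoint (for $n=2$ the single edge, for $n=3$ either of the two length-two paths out of $x$) serves as $P_1$. For the inductive step I will assume $n \geq 4$ and split into two cases depending on whether $x$ lies in a cherry of $T$. Since $n\geq 3$, at least one cherry exists (cf.\ \cite[Proposition 1.25]{Semple2003}), so one of the two cases always applies.

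In the first case, where $x$ is contained in a cherry $[x,y]$, I would set $P_1$ to be the two-edge leaf-to-leaf path from $x$ to $y$ and then perform a cherry reduction of type 2 on $[x,y]$, yielding a binary phylogenetic tree $T'$ on $X\setminus\{x,y\}$ with $n-2$ leaves. Applying Lemma~\ref{lem:Fischer} directly to $T'$ furnishes $\lfloor(n-2)/2\rfloor = p-1$ edge-disjoint leaf-to-leaf paths $P'_2,\ldots,P'_p$ of $T'$, and I would lift each such path back to a leaf-to-leaf path $P_i$ of $T$ by replacing, whenever applicable, the unique edge of $T'$ created by suppressing the degree-2 vertex with the two underlying edges of $T$.

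In the second case, where $x$ does not lie in any cherry, I would pick an arbitrary cherry $[y,z]$ of $T$ (necessarily disjoint from $x$), perform a type-2 cherry reduction on $[y,z]$, and apply the \emph{inductive hypothesis} to the resulting tree $T'$ on $X\setminus\{y,z\}$, which still contains $x$ as a leaf and has $n-2\geq 2$ leaves. This yields $p-1$ edge-disjoint leaf-to-leaf paths $P'_1,\ldots,P'_{p-1}$ of $T'$ with $x$ an endpoint of $P'_1$. Lifting them to $T$ as in the first case (so that $x$ remains an endpoint of $P_1$) and appending the two-edge cherry path from $y$ to $z$ as $P_p$ produces the desired collection.

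The only delicate point, and the one I would flag as the main (albeit mild) obstacle, is to verify that the lifting preserves edge-disjointness and that the short cherry path shares no edge with any lifted path. Both are transparent: the sole edge of $T'$ that is not already an edge of $T$ is the one arising from suppressing the degree-2 vertex created by deleting the cherry's parent, so two lifted paths in $T$ sharing one of its two underlying edges would correspond to two $T'$-paths both traversing that suppressed edge, contradicting edge-disjointness in $T'$; meanwhile the short cherry path uses only the two edges incident to the cherry's parent, and these are precisely the edges removed by the reduction. With these checks in place, the induction closes and delivers the $p$ edge-disjoint leaf-to-leaf paths with $x$ as an endpoint of $P_1$.
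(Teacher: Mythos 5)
Your proof is correct, but it takes a genuinely different route from the paper. The paper's argument is non-inductive: it applies Lemma~\ref{lem:Fischer} to obtain $p$ edge-disjoint leaf-to-leaf paths, collects one endpoint of each path (together with $x$) into a set $A$ and the remaining relevant leaves into a set $B$, and then invokes the strengthened version of Menger's theorem (Proposition~\ref{prop:menger2}) on a single $\{x\}$--$B$-path to produce a maximum system of disjoint $A$--$B$-paths that is guaranteed to retain $x$ as an endpoint; since the original $p$ paths witness that this maximum is at least $p$, the claim follows. You instead run an induction on $n$ via type-2 cherry reductions, splitting on whether $x$ lies in a cherry, invoking Lemma~\ref{lem:Fischer} only in the case where the cherry containing $x$ is excised, and checking by hand that the lifting through the suppressed degree-2 vertex preserves edge-disjointness. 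Your route is more elementary and constructive --- it avoids Menger-type machinery entirely for this lemma and matches the cherry-reduction style the paper uses later in Lemmas~\ref{lem:cherryred1} and~\ref{lem:cherryred2} --- at the cost of a case analysis and the lifting verification; the paper's route is shorter here because Proposition~\ref{prop:menger2} is needed elsewhere in the manuscript anyway. One cosmetic remark: your disjointness justification says the two edges of the cherry path are ``precisely the edges removed by the reduction,'' whereas the reduction in fact also removes the third edge at the cherry's parent and alters the two edges at the suppressed vertex; the essential point --- that no lifted path can use an edge of the cherry path --- still stands, but the phrasing should be tightened.
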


\begin{proof}  We first use Lemma \ref{lem:Fischer} to derive a set of $p$ edge-disjoint leaf-to-leaf paths $P_1^\prime,\ldots,P_{p}^\prime$, where $p =\left\lfloor \frac{n}{2}\right\rfloor\geq 1$, and then show that we can modify these paths so that leaf $x$ is an endpoint of one of the paths. 

We define sets $A$ and $B$ with $A,B\subset X$ such that $A$ contains precisely one endpoint of each path $P_1^\prime,\ldots,P_{p}^\prime$ and $B$ contains the other  endpoint of each path $P_1^\prime,\ldots,P_{p}^\prime$. If $x$ is an endpoint of one of these paths, there is nothing to show. So we now consider the case that $x$ is not an endpoint of these paths and add $x$ to $A$. If then there are leaves left that are not contained in any $P_i^\prime$, we add them to $B$. This way, we ensure that $A\cap B=\emptyset$ and $A\cup B = X$, so $A \vert B$ is an $X$-split.

Now, consider some arbitrary $\{x\}$-$B$-path $P$ in $T$, which is also an $A$-$B$-path as $\{x\}\subseteq A$. Let $j$ be the minimum number of vertices needed to separate $A$ from $B$. Applying Proposition \ref{prop:menger2} to $T$ with $\mathcal{P}:= \{P\}$ (and thus $i=1$), we get a collection $\mathcal{Q}$ of $j$ vertex-disjoint $A$-$B$-paths. Note that in particular, as $\mathcal{P}=\{P\}$ and as $P$ is an $A$-$B$-path with endpoint $x\in A$, we have $x\in \mathcal{P}(A)$. 

Now, by Theorem \ref{thm:menger3} we know that $j$ is also equal to the maximum number of vertex-disjoint $A$-$B$-paths. Moreover, by Corollary \ref{cor:edge=vertex}, $j$ is also equal to the maximum number of edge-disjoint $A$-$B$-paths. By choice of $A \vert B$, we know that there are at least $p$ edge-disjoint $A$-$B$-paths (namely $P_1^\prime,\ldots P_p^\prime$), so $j\geq p$. Therefore, $\mathcal{Q}$ contains at least $p$ edge-disjoint paths $P_1,\ldots,P_p$ with endpoints $\mathcal{P}(A)\subseteq \mathcal{Q}(A)$ by Proposition \ref{prop:menger2}. 

 So $x$ is endpoint of one of the $p$ edge-disjoint paths $P_1,\ldots,P_p$ in $\mathcal{Q}$, and we may assume without loss of generality that $x$ is an endpoint of $P_1$. This completes the proof.
\end{proof}

We are now in the position to prove Theorem \ref{thm:characterization}.

\begin{proof}[Proof of Theorem \ref{thm:characterization}]
As the case $k=1$ follows directly from Proposition \ref{prop:A1A2}, we may in the following assume $k\geq 2$.\footnote{Note that by Proposition \ref{prop:A1A2}, we could even assume $k \geq 3$, but our proof does not depend on this requirement. This also implies that our proof supersedes Proposition 1 of \cite{Fischer2019}.}

Let $n\geq 4k$ and let $T$ and $\widetilde{T}$ be two binary phylogenetic $X$-trees as specified in the theorem. If $T\cong\widetilde{T}$, we clearly have $l(f,T) = l(f,\widetilde{T})$ for all binary characters $f$ on $X$, and thus we also have $A_k(T)=A_k(\widetilde{T})$, which completes the first direction of the proof.

So now we assume that $T\neq \widetilde{T}$, in which case we know by the Buneman theorem that $\Sigma(T) \not= \Sigma(\widetilde{T})$. Using the fact that $\vert \Sigma(T)\vert  = \vert \Sigma(\widetilde{T})\vert  = 2n-3$, we can conclude $\Sigma(T)\setminus \Sigma(\widetilde{T}) \not= \emptyset$. Let $\sigma = A\vert B \in \Sigma(T)\setminus \Sigma(\widetilde{T})$ be of minimal size. Moreover, we may assume without loss of generality that $\vert A\vert  \leq \vert B\vert $, which implies $\vert B\vert \geq \frac{n}{2}$. As all trivial splits are contained in both $\Sigma(T)$ and $\Sigma(\widetilde{T})$, we conclude that $\sigma$ cannot be trivial, and thus we have $\vert A\vert  \geq 2$. Moreover, as $\sigma$ is contained in $\Sigma(T)$, it corresponds to an edge $e=\{u,v\}$ whose removal would divide $T$ into two subtrees $T_A$ and $T_B$, with $T_A$ being further subdividable into $T_{A_1}$ and $T_{A_2}$ by the removal of $u$, cf. Figure \ref{fig:treedecomp} for an illustration.

\begin{figure} 
\center
\includegraphics[width=0.9\textwidth]{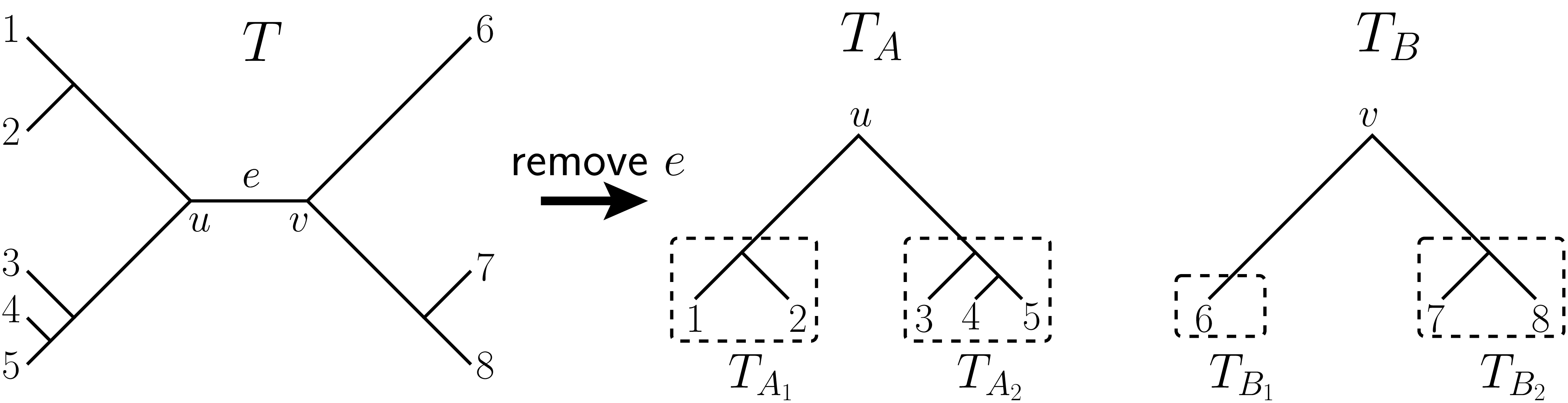}
\caption{  \scriptsize (taken from \cite{Fischer2019, Fischer2022}) By removing an edge $e=\{u,v\}$ from an unrooted binary phylogenetic tree $T$, it is decomposed into two 
rooted subtrees, $T_A$ and $T_B$. If, as in this figure, both of them consist of more than one node, then we can further decompose them into two subtrees $T_{A_1}$ and $T_{A_2}$ or $T_{B_1}$ and $T_{B_2}$, by removing $u$ or $v$, respectively. 
}
\label{fig:treedecomp}
\end{figure}

Note that clearly, $\vert A_1\vert  < \vert A\vert$ and $ \vert A_2\vert  < \vert A\vert$. Furthermore, the two $X$-splits $\sigma_1 = A_1\vert (X\setminus A_1)$ and $\sigma_2 = A_2\vert (X\setminus A_2) $ must both be contained in $\Sigma(T)$ as $T_{A_1}$ and $T_{A_2}$ are subtrees of $T$ inducing these splits. By the minimality of $\sigma$, $\sigma_1$ and $\sigma_2$ are actually  contained in $\Sigma(T)\cap \Sigma(\widetilde{T})$. Therefore, the $X$-splits $\sigma_1$ and $\sigma_2$ are also induced by subtrees of $\widetilde{T}$ (as are all other splits induced by edges in $T_{A_1}$ and $T_{A_2}$, respectively), which implies that $\widetilde{T}$ contains the two  subtrees $T_{A_1}$ and $T_{A_2}$. However, as $\Sigma(\widetilde{T})$ does not contain $\sigma$, there is a path $S = \rho_1, \beta_1, \ldots, \beta_m ,\rho_2$ in $\widetilde{T}$ from the root $\rho_1$ of $T_{A_1}$ to the root $\rho_2$ of $T_{A_2}$ with  $m \geq 2$ (otherwise, $\widetilde{T}$ would also contain $\sigma$), cf. Figure \ref{fig:tildeTfig}.  This also implies that besides $T_{A_1}$ and $T_{A_2}$, there are $m$ more subtrees $\widetilde{T}_i$ obtained by deleting the edges of $S$, with each such subtree $\widetilde{T}_i$ containing $\beta_i$ as a leaf, cf. Figure \ref{fig:tildeTfig}. Moreover, each of the trees $\widetilde{T}_i$ can be thought of as a binary phylogenetic tree with taxon set $X_i = B_i \cup \{\beta_i\}$, where $B_i$ is a subset of $B$.

\begin{figure} 
\center
\includegraphics[width=0.9\textwidth]{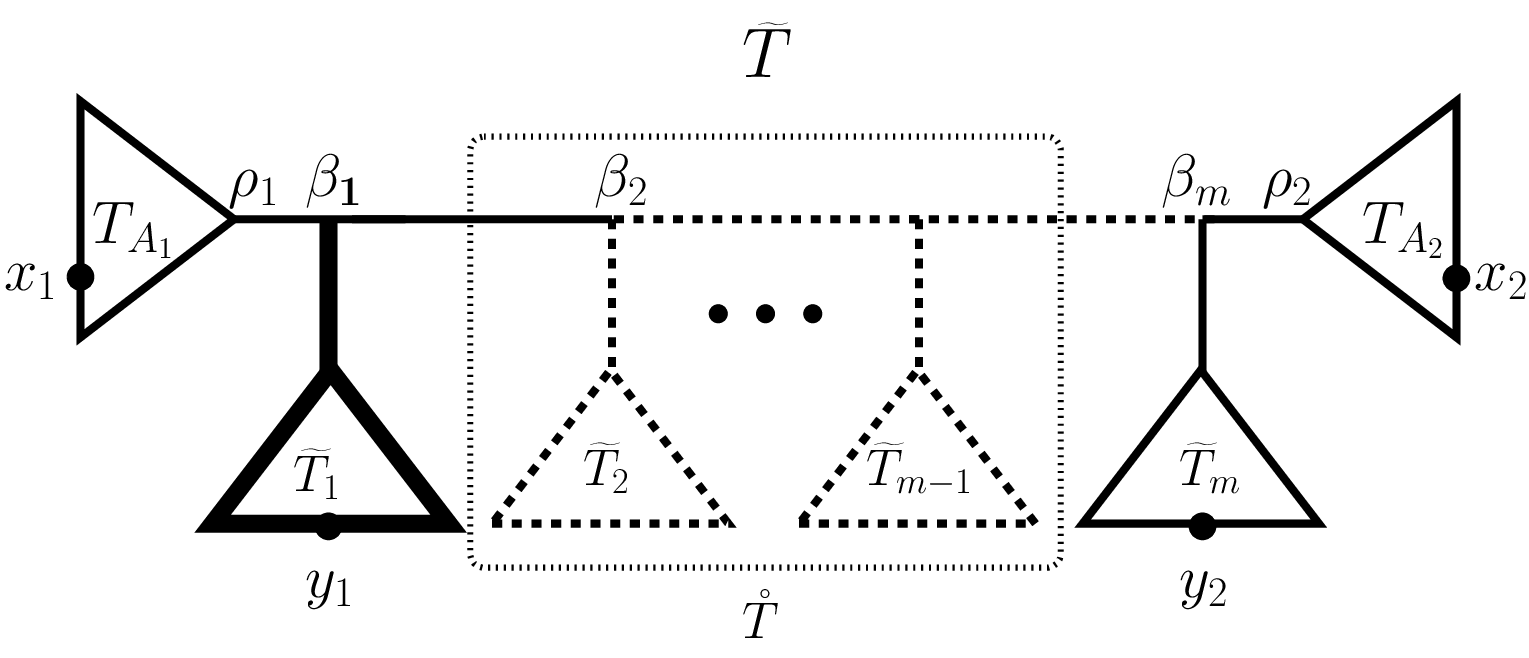}
\caption{  \scriptsize  (adapted from \cite{Fischer2022}) Tree $\widetilde{T}$ as described in the proof of Theorem \ref{thm:characterization}. Subtree $\widetilde{T}_1$ is exemplarily highlighted in bold (note that it does contain $\beta_1$ as a leaf); the other subtrees $\widetilde{T}_i$ are formed analogously. Moreover, note that the subtree $\mathring{T}$ might be empty, i.e., the dashed parts of the tree might not exist, namely if $m=2$.}
\label{fig:tildeTfig}
\end{figure}
Next, our goal is to show that we can construct a binary character $f$ on $X$ which satisfies all of the following properties:
\begin{enumerate}
\item Each taxon in $A$ is assigned state $a$ by $f$. 
\item There are two edge-disjoint paths $P_1$ and $P_2$ with $P_1$ connecting some leaf $x_1\in T_{A_1}$ with some leaf $y_1\in \widetilde{T}_1$ and with $P_2$ connecting some leaf $x_2 \in T_{A_2}$ with some leaf $y_2\in \widetilde{T}_m$, and $f$ assigns both $y_1$ and $y_2$ state $b$.
\item There are additional $k-2$ paths $P_3, \ldots, P_k$ from a leaf in $B$ to another leaf in $B$, respectively, such that $P_1, \ldots, P_k$ form a collection of edge-disjoint paths in $\widetilde{T}$. The endpoints of each path in this collection are assigned different states by $f$.
\item Every taxon in $B$ which is not contained in some path $P_i$ is assigned state $b$ by character $f$.
\end{enumerate}

Note that if we succeed to find a character $f$ with all these properties, then $A_f$ is the union of $A$ and a set of $k-2$ elements of $B$, which are exactly the endpoints of paths $P_3, \ldots, P_k$ to which $a$ is assigned, and $B_f$ is a proper subset of $B$ obtained by deleting the mentioned endpoints from $B$. 

The proof strategy now is as follows: We continue to show that we can indeed choose $f$ as described above. Subsequently, we will show that for this character $f$, we have $f \in A_k(\widetilde{T}) \setminus A_k(T)$. The latter implies $A_k(\widetilde{T}) \neq A_k(T)$ and will thus complete the proof.

So in order to find $f$ fulfilling the above properties, the first and the fourth property do not pose any problem. The crucial point to show is that we can choose the $k$ edge-disjoint paths with the described properties.

First, recalling that each $\widetilde{T}_i$ is a binary  phylogenetic tree with taxon set $X_i=B_i \cup \{\beta_i\}$, we apply Lemma \ref{lem:leafchoice} to $\widetilde{T}_1$ and $\widetilde{T}_m$. To simplify notation, let   $b_i= \vert B_i\vert  + 1$ and $c_i= \left\lfloor\frac{b_i}{2} \right\rfloor$ for $i=1,\ldots,m$. Then, Lemma \ref{lem:leafchoice} implies that $\widetilde{T}_1$ contains at least $c_1$ edge-disjoint leaf-to-leaf paths $R_1, \ldots, R_{c_1}$ such that $\beta_1$ is an endpoint of $R_1$. Analogously, $\widetilde{T}_m$ contains at least $c_m$ edge-disjoint leaf-to-leaf paths $R_1^\prime, \ldots, R_{c_m}^\prime$ with $\beta_m$ being an endpoint of $R_1^\prime$.\\ 

Next, we choose and fix a leaf $x_1$ in $T_{A_1}$ as well as a leaf $x_2$ in $T_{A_2}$. Let $P_1^\prime$ be the unique path from $x_1$ to $\beta_1$ and $P_2^\prime$ be the unique path from $x_2$ to $\beta_m$ in $\widetilde{T}$. Then, let $P_1$ be the concatenation of $P_1^\prime$ and $R_{1}$, i.e., $P_1$ connects $x_1$ in $T_{A_1}$ with some leaf $y_1$ in $\widetilde{T}_1$, and let $P_2$ be the concatenation of $P_2^\prime$ and $R_{1}^\prime$, i.e., $P_2$ connects $x_2$ in $T_{A_2}$ with some leaf $y_2$ in $\widetilde{T}_m$. Clearly, $P_1$ and $P_2$ are edge-disjoint (as they are on opposite sides of edge $\{\beta_1,\beta_2\}$ in $\widetilde{T}$). Thus, by construction of $R_1,\ldots,R_{c_1}$ and $R_1^\prime,\ldots,R_{c_m}$, the set $\{P_1,P_2,R_2,\ldots,R_{c_1},R_2^\prime,\ldots,R_{c_m}^\prime\}$ of $c_1+c_m$ paths is also edge-disjoint. Moreover, clearly $P_1$ and $P_2$ with $x_1$ and $x_2$ assigned state $a$ and $y_1$ and $y_2$ assigned state $b$ fulfill Property 2 of the list above.

 Consider $B^{\prime} := \bigcup\limits_{i=2}^{m-1} B_i$. In the following, let $c^\prime= \left\lfloor \frac{\vert B^\prime \vert}{2}\right\rfloor$. We now distinguish three cases.
 \begin{itemize}
\item If $m=2$, $B^{\prime}$ is empty and thus $\vert B^\prime \vert=0$. In this case, we set $\mathring{T}$ to be the empty tree (with empty leaf set $B^\prime=\emptyset$). In particular, $\mathring{T}$ does not contain any leaf-to-leaf paths, which implies that the number of edge-disjoint leaf-to-leaf paths can be denoted by $0=c^\prime$.
\item If $m=3$, we consider the tree $\widetilde{T}_2^\prime$, which is derived from $\widetilde{T}_2$ by deleting taxon $\beta_2$ and subsequently suppressing the resulting degree-2 vertex. Clearly, the leaf set of $\widetilde{T}_2^\prime$ is precisely $B^{\prime}=B_2$, which is why, by Lemma \ref{lem:Fischer}, it contains at least  $c^\prime$ many edge-disjoint leaf-to-leaf paths. As each of them naturally corresponds to a path in $\widetilde{T}_2$ (by re-introduction of the suppressed degree-2 vertex, if applicable), $\widetilde{T}_2$ contains at least $c^\prime$ many edge-disjoint leaf-to-leaf paths $\mathring{P}_1, \ldots, \mathring{P}_{c^\prime}$.
\item If $m>3$, we define $\mathring{T}$ to be the tree derived from $\widetilde{T}$ by cutting the edges $\{\beta_1,\beta_2\}$ and $\{\beta_{m-1},\beta_m\}$ and keeping only the tree that does neither contain $x_1$ nor $x_2$, cf. Figure \ref{fig:tildeTfig}. Let $\mathring{T}^\prime$ be the tree derived from tree $\mathring{T}$ by suppressing both $\beta_2$ and $\beta_{m-1}$. Then, $\mathring{T}^\prime$ is a binary phylogenetic tree on taxon set $B^\prime$, which is why by Lemma \ref{lem:Fischer}, it contains at least $c^\prime$ many edge-disjoint leaf-to-leaf paths. As each of them naturally corresponds to a path in $\mathring{T}$ (by re-introduction of $\beta_2$ and $\beta_{m-1}$), $\mathring{T}$ contains at least $c^\prime$ many edge-disjoint leaf-to-leaf paths $\mathring{P}_1, \ldots, \mathring{P}_{c^\prime}$.
\end{itemize}
 
 So in all these cases, we can find $c^\prime$ many leaf-to-leaf paths, namely $\mathring{P}_1, \ldots, \mathring{P}_{c^\prime}$,  which by construction are also edge-disjoint to the paths in the set $\{P_1,P_2,R_2,\ldots,R_{c_1},R_2^\prime,\ldots,R_{c_m}^\prime\}$, as none of these paths are using edges $\{\beta_1,\beta_2\}$ or $\{\beta_{m-1},\beta_m\}$, which means that these edges keep the paths apart.

Considering the set $\{R_2,\ldots,R_{c_1},R_2^\prime,\ldots,R_{c_m}^\prime,\mathring{P}_1, \ldots, \mathring{P}_{c^\prime}\}$, it is obvious that if this set of $(c_1-1)+(c_m-1)+c^\prime$ many edge-disjoint paths contains at least $k-2$ paths, then Property 3 of the above list can be fulfilled by a suitable choice of $f$. 

Thus, we next show that $(c_1-1)+(c_m-1)+c^\prime\geq k-2$, or, equivalently, that $c_1+c_m+c^\prime \geq k$. Using $\vert B_1 \vert + \vert B_m \vert+\vert B^\prime \vert=\vert B \vert \geq \frac{n}{2}$ and $k\leq \frac{n}{4}$, we can easily bound the term $c_1 + c_m + c^{\prime}$ as follows:

\begin{align*}
c_1 + c_m + c^{\prime} &= \left\lfloor\frac{b_1}{2} \right\rfloor+ \left\lfloor\frac{b_m}{2}\right\rfloor+ \left\lfloor\frac{\vert B^{\prime}\vert }{2} \right\rfloor \\ &\geq \frac{b_1 - 1}{2} + \frac{b_m - 1}{2} + \frac{\vert B^{\prime}\vert -1}{2}
 \\ &= \frac{\vert B_1\vert  + \vert B_m\vert  + \vert B^{\prime}\vert  - 1}{2}= \frac{\vert B\vert }{2} - \frac{1}{2} \geq \frac{n}{4} - \frac{1}{2} \geq k - \frac{1}{2}
\end{align*}
As $c_1 + c_m + c^{\prime}$ and $k$ are integers, $c_1 + c_m + c^{\prime} \geq k$ follows as desired.

In summary, so far we have shown that we can construct a character $f$ as follows: All taxa in $A$ are assigned state $A$ (Property 1). Taxa $y_1$ and $y_2$, the endpoints of the edge-disjoint paths $P_1$ and $P_2$ in $\widetilde{T}_{1}$ and $\widetilde{T}_{m}$, respectively, are assigned state $b$ (Property 2). There at least $k-2$ more edge-disjoint paths which we refer to as $P_3,\ldots,P_k$ (namely $\{R_2,\ldots,R_{c_1},R_2^\prime,\ldots,R_{c_m}^\prime,\mathring{P}_1, \ldots, \mathring{P}_{c^\prime}\}$), all of which get assigned state $a$ to one endpoint and $b$ to the other endpoint (Property 3). All remaining taxa, if any, get assigned state $b$ (Property 4). So indeed, it is possible to construct a character $f$ with Properties 1--4. 

 It remains to show that the existence of this character $f$ implies $A_k(T)\neq A_k(\widetilde{T})$. In this regard, we will prove that $f \in A_k(\widetilde{T})\setminus A_k(T)$.

Each of the paths $P_1,\ldots, P_k$ connects some taxon in $A_f$ with some taxon in $B_f$, so by Proposition \ref{prop:menger0} we know that $l(f,\widetilde{T}) \geq k$. To show that we also have $l(f, \widetilde{T}) \leq k$ (which then implies that $l(f, \widetilde{T}) = k$), we consider the following extension $g$ of $f$ on $\widetilde{T}$: \begin{equation*} g(v)=\begin{cases} f(v) & \text{if $v \in X$},\\ a & \text{if $v \in \mathring{V}(T_{A_1})\cup \mathring{V}(T_{A_2})$},\\ b  &\text{else.}\end{cases}\end{equation*} Clearly, $g$ induces no changing edges within $T_{A_1}$ or $T_{A_2}$ (as all vertices in these subtrees of $\widetilde{T}$ are assigned state $a$). As the rest of tree $\widetilde{T}$ contains only $k-2$ leaves in state $a$ and all inner vertices outside of $T_{A_1}$ and $T_{A_2}$ are assigned state $b$, $g$ induces at most $k$ changes in total, namely one on edge $\{\rho_1,\beta_1\}$, one on edge $\{\rho_2,\beta_m\}$ and potentially $k-2$ more changes on the pending edges leading to those leaves in state $a$ that are contained in $\bigcup\limits_{i=1}^m \widetilde{T}_i$. The existence of this extension $g$ of $f$ with $ch(g,\widetilde{T}) \leq k$ shows that we indeed have $l(f,\widetilde{T})\leq k$, and thus in total $l(f,\widetilde{T})= k$ and $f \in A_k(\widetilde{T})$ as required.  

 It remains to show that $f \not\in A_k(T)$. In order to see this, like above, we consider an extension $g$ of $f$, but this time on $T$. We define $g$ as follows: \begin{equation*} g(v)=\begin{cases} f(v) & \text{if $v \in X$},\\ a & \text{if $v \in \mathring{V}(T_{A})$},\\ b  &\text{else.}\end{cases}\end{equation*}  Clearly, $g$ induces no changing edges within $T_{A}$ (as all vertices in this subtree of $T$ are assigned state $a$). As the rest of tree $T$ contains only $k-2$ leaves in state $a$ and all inner vertices outside of $T_A$ are assigned state $b$, $g$ induces at most $k-1$ changes in total, namely one on the edge on which $T_A$ is pending (i.e., the edge $\{u,v\}$ connecting the unique vertex $u$ adjacent both to $\rho_1$ and $\rho_2$ in $T$ with its third neighbor $v \neq \rho_1,\rho_2 $)  and potentially $k-2$ more changes on the pending edges leading to those leaves in state $a$ that are contained in $B=X \setminus A$. The existence of this extension $g$ of $f$ with $ch(g,T) \leq k-1$ shows that we indeed have $l(f,T)\leq k-1$, and thus $f \not\in A_k(T)$ as claimed.

In summary, we have found a character $f \in A_k(\widetilde{T})$ for which we know that $f \not\in A_k(T)$, which shows that $A_k(\widetilde{T})\neq A_k(T)$. This completes the proof.
\end{proof}

As stated before, Theorem \ref{thm:characterization} generalizes Proposition \ref{prop:A1A2} to all cases of $k$ with $k \leq \frac{n}{4}$ and thus significantly improves the known bound of $k \leq \frac{n}{20}$ from \cite{Fischer2022}. In the next section, we show that at least within the NNI neighborhood of a binary phylogenetic $X$-tree, the bound can be improved even further in the following sense: If $T$ is a binary phylogenetic tree with $n > 2k+2$ leaves (for $k \in \mathbb{N}_{\geq 1}$) and if $\widetilde{T}$ is an NNI neighbor of $T$, we can guarantee $A_k(T)\neq A_k(\widetilde{T})$.

\subsection{Investigating the NNI neighborhood}

It is the main aim of this section to show that the generalization of Proposition \ref{prop:A1A2} to all cases of $k$ with $k \leq \frac{n}{4}$ provided by the previous section can at least locally be further improved to all $k\leq \frac{n}{2} - \frac{3}{2}$. In particular, we will show that if $k\leq \frac{n}{2}-\frac{3}{2}$ and if $\widetilde{T}$ is an NNI neighbor of $T$, then we have $A_k(T)\neq A_k(\widetilde{T})$.

\begin{theorem}\label{thm:boundNNI} Let $k \in \mathbb{N}_{\geq 1}$ and let $n\in \mathbb{N}_{>2k+2}$. Let $T$ and $\widetilde{T}$ be two binary  phylogenetic $X$-trees with $\vert X\vert=n$ such that $\widetilde{T}$ is in the NNI neighborhood of $T$. Then, $T\cong\widetilde{T}$ if and only if $A_k(T)=A_k(\widetilde{T})$.
\end{theorem}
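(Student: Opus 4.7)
The forward direction is immediate, so I focus on the converse: given an NNI neighbor $\widetilde{T} \not\cong T$, I plan to build a character $f$ with $l(f,\widetilde{T}) = k$ and $l(f,T) \leq k-1$, placing $f \in A_k(\widetilde{T}) \setminus A_k(T)$.

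Write $e = \{u,v\}$ for the NNI edge in $T$, with pendant subtrees $T_1, T_2$ at $u$ and $T_3, T_4$ at $v$; let $n_i := |L(T_i)|$ and let $r_i$ denote the attachment vertex of $T_i$. In $\widetilde{T}$ the corresponding edge $\{u',v'\}$ has $T_1, T_3$ at $u'$ and $T_2, T_4$ at $v'$, and $A | B$ with $A = L(T_1) \cup L(T_2)$ is the unique split in $\Sigma(T) \setminus \Sigma(\widetilde{T})$.

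The plan mirrors the proof of Theorem~\ref{thm:characterization}, but with one crucial refinement exploiting the NNI structure: the internal $a$-$b$-paths that raise the parsimony score to $k$ will be drawn from \emph{all four} subtrees $T_i$, not merely from the two ``$B$-side'' ones. Concretely, for each $i$ I would augment $T_i$ with a pseudo-leaf $r_i^*$ at $r_i$ to form a binary phylogenetic tree $T_i^+$ on $n_i+1$ leaves; applying Lemma~\ref{lem:leafchoice} with $x = r_i^*$ yields $c_i^+ := \lceil n_i/2 \rceil$ edge-disjoint leaf-to-leaf paths, one of which (call it $Q_i$) has $r_i^*$ as endpoint and so pulls back to a path inside $T_i$ from $r_i$ to some leaf $x_i$, while the other $c_i^+ - 1$ lie entirely within $T_i$. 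The driving inequality is
\[
\sum_{i=1}^{4}(c_i^+ - 1) \;=\; \sum_{i=1}^{4}\lceil n_i/2 \rceil - 4 \;\geq\; \lceil n/2 \rceil - 4 \;\geq\; (k+2) - 4 \;=\; k - 2,
\]
where the second-to-last step uses $n \geq 2k+3$. I therefore pick nonnegative $p_1,\ldots,p_4$ with $\sum_i p_i = k-2$ and $p_i \leq c_i^+ - 1$, select $p_i$ internal paths inside each $T_i$, and define $f$ by labelling every leaf of $T_1 \cup T_2$ as $a$ and every leaf of $T_3 \cup T_4$ as $b$, except that on each selected internal path one endpoint is flipped to the opposite colour.

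Verification will hinge on three short applications of Propositions~\ref{prop:menger0} and \ref{prop:menger1}. First, the two ``crossing'' paths obtained by stitching $Q_1$ to $Q_3$ through $\{r_1,u'\}, \{u',r_3\}$ and $Q_2$ to $Q_4$ through $\{r_2,v'\}, \{v',r_4\}$, together with the $k-2$ internal paths, form $k$ pairwise edge-disjoint $A_f$-$B_f$-paths in $\widetilde{T}$ (edge-disjointness is inherited from the Lemma~\ref{lem:leafchoice} guarantee inside each $T_i^+$ and the fact that the four subtrees share no edges), giving $l(f,\widetilde{T}) \geq k$. Second, deleting $\{u',r_3\}$, $\{v',r_4\}$ and the $k-2$ pendant edges at the flipped endpoints is an $A_f$-$B_f$-cut of $\widetilde{T}$ of size $k$, so $l(f,\widetilde{T}) = k$. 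Third, in $T$ the analogous cut uses only the single edge $\{u,v\}$ in place of $\{u',r_3\}$ and $\{v',r_4\}$, so $l(f,T) \leq k-1 < k$. The main obstacle is the counting step above: Theorem~\ref{thm:characterization}'s proof drew internal paths only from $T_3, T_4$, producing at most $\lceil n_3/2 \rceil + \lceil n_4/2 \rceil - 2$ of them and forcing $n \geq 4k$; recruiting paths from $T_1$ and $T_2$ as well replaces this by $\sum_{i=1}^{4} \lceil n_i/2 \rceil - 4 \geq \lceil n/2 \rceil - 4$, which suffices as soon as $n \geq 2k+3$.
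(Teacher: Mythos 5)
Your proof is correct (up to the trivial caveat that for $k=1$ the prescription $\sum_i p_i = k-2$ becomes negative; that case is immediate from Buneman's theorem, i.e., Proposition~\ref{prop:A1A2}), and it takes a genuinely different route from the paper. The paper deduces Theorem~\ref{thm:boundNNI} from Proposition~\ref{prop:nni}, a full characterization $A_k(T)=A_k(\widetilde{T}) \Leftrightarrow s(T,\widetilde{T})<k-2$ with $s(T,\widetilde{T})=\sum_{i=1}^4\lfloor (n_i-1)/2\rfloor$, whose relevant direction is proved by induction on $k$ via the cherry-reduction machinery of Lemmas~\ref{lem:cherryred1} and~\ref{lem:cherryred2} (with Proposition~\ref{prop:A1A2} as base case). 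You instead construct the separating character directly, in the style of the proof of Theorem~\ref{thm:characterization}, with the key refinement of harvesting the $k-2$ internal paths from all four pendant subtrees via the pseudo-leaf trick and Lemma~\ref{lem:leafchoice}; since $\lceil n_i/2\rceil - 1 = \lfloor (n_i-1)/2\rfloor$, your count $\sum_i(\lceil n_i/2\rceil-1)\geq k-2$ is exactly the condition $s(T,\widetilde{T})\geq k-2$, so you are in effect giving a direct, induction-free proof of the half of Proposition~\ref{prop:nni} that Theorem~\ref{thm:boundNNI} needs, and your three Menger-type verifications (the $k$ edge-disjoint $A_f$-$B_f$-paths in $\widetilde{T}$, the size-$k$ cut in $\widetilde{T}$, and the size-$(k-1)$ cut in $T$) all go through. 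What your approach buys is a shorter, self-contained, constructive argument exhibiting an explicit $f\in A_k(\widetilde{T})\setminus A_k(T)$; what it does not recover is the converse direction of Proposition~\ref{prop:nni}, which the paper also relies on elsewhere (Corollary~\ref{cor:NNI1}, the counterexamples at $n=2k+1$ and $n=2k+2$). Two minor points deserve a sentence each in a full write-up: when $n_i=1$ the pullback of $Q_i$ degenerates to the single leaf $r_i=x_i$, which still yields a valid crossing path; and the flipped endpoints are automatically distinct from $x_1,\ldots,x_4$ because two edge-disjoint leaf-to-leaf paths in a binary tree cannot share a leaf endpoint.
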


In order to prove this theorem, we first need to establish some preliminary results. We start with the following definition followed by two technical lemmas, which connect $A_k(T)$ to smaller trees derived by cherry reductions of types 1 and 2, respectively. Such reductions will turn out to be useful for subsequent inductive proofs.

\begin{definition}\label{def:cherryred}
Let $T$ be a binary phylogenetic $X$-tree. Let $T^1$ and $T^2$ be the two trees derived from $T$ when performing a cherry reduction of type 1 or 2, respectively, to cherry $[x,y]$ of $T$. Moreover, let $f: X\to \{a,b\}$ be a binary character.
\begin{enumerate}
\item If $f(x)=f(y)$, we define $f^1$ as the restriction of $f$ to the taxa of $T^1$.
\item If $f(x)\neq f(y)$, we define $f^2$ as the restriction of $f$ to the taxa of $T^2$.
\end{enumerate}
\end{definition}

\begin{lemma}\label{lem:cherryred1} Let $k \in \mathbb{N}_{\geq 1}$, $n \in \mathbb{N}_{\geq 4}$ and let $X=\{1,\ldots,n\}$. Let $T$ be a binary  phylogenetic $X$-tree. Let $T^1$ and $T^2$ be the two trees derived from $T$ when performing a cherry reduction of type 1 or 2, respectively, to cherry $[x,y]$ of $T$. Moreover, let $f \in A_k(T)$. Then, we have:
\begin{enumerate}
\item If $f(x)=f(y)$, then $f^1 \in A_k(T^1)$.
\item If $f(x)\neq f(y)$, then $f^2 \in A_{k-1}(T^2)$.
\end{enumerate}
\end{lemma}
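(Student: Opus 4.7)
The plan is to analyze the parsimony score $l(f,T)$ locally at the cherry $[x,y]$, by pinning down what an optimal extension can look like near the common parent $u$ of $x$ and $y$, and then comparing the number of changing edges on the few edges that differ between $T$ and $T^i$. Throughout, let $u$ be the common neighbor of $x,y$ in $T$ and let $v$ be the third neighbor of $u$. Since $n\geq 4$, the vertex $v$ is internal, so in the type-$2$ case $v$ has two further neighbors $w_1,w_2$. All edges outside a small local set are shared between $T$ and $T^i$, so I only need to account for the local change.

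For part $1$, the local edges are $E_T=\{\{u,x\},\{u,y\},\{u,v\}\}$ in $T$ versus the single edge $E_{T^1}=\{\{v,y\}\}$ arising from suppressing $u$ in $T^1$. First I would show that any most parsimonious extension $g$ of $f$ on $T$ may be assumed to satisfy $g(u)=f(x)=f(y)$: otherwise both $\{u,x\}$ and $\{u,y\}$ are changing, and toggling $g(u)$ to $f(x)$ eliminates two changes while introducing at most one new change on $\{u,v\}$, contradicting minimality. Under this assumption $\{u,x\},\{u,y\}$ contribute $0$ changes, and $\{u,v\}$ is changing iff $g(v)\neq f(x)$, which is exactly the condition for $\{v,y\}$ to be changing in $T^1$ under the restriction $g^1$. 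Hence $l(f,T)=l(f^1,T^1)$, so $f\in A_k(T)$ yields $f^1\in A_k(T^1)$.

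For part $2$, the local edges are $E_T=\{\{u,x\},\{u,y\},\{u,v\},\{v,w_1\},\{v,w_2\}\}$ versus $E_{T^2}=\{\{w_1,w_2\}\}$. I would show both inequalities separately. For $l(f,T)\geq l(f^2,T^2)+1$: taking a most parsimonious $g$ on $T$ and restricting to $T^2$, the edge $\{w_1,w_2\}$ of $T^2$ is changing iff $g(w_1)\neq g(w_2)$. If $g(w_1)=g(w_2)$, then since $f(x)\neq f(y)$ one of $\{u,x\},\{u,y\}$ is always changing, giving at least $1$ local change in $T$ against $0$ in $T^2$. If $g(w_1)\neq g(w_2)$, then regardless of $g(v)$ one of $\{v,w_1\},\{v,w_2\}$ is changing, and again one of $\{u,x\},\{u,y\}$ is changing, giving at least $2$ local changes in $T$ against $1$ in $T^2$. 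For the other direction $l(f,T)\leq l(f^2,T^2)+1$: starting from a most parsimonious $g^2$ on $T^2$, I would extend by setting $g(v)=g^2(w_1)$ and $g(u)=g(v)$, and verify by the same case split that the local changes in $T$ exceed those in $T^2$ by exactly $1$. Combining these gives $l(f,T)=l(f^2,T^2)+1$, hence $f^2\in A_{k-1}(T^2)$.

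The main obstacle I anticipate is the type-$2$ case, because here the suppression of $v$ genuinely merges two edges and the number of local changes depends on the state $g(v)$ chosen, so a naive restriction of a most parsimonious extension need not yield a most parsimonious extension on $T^2$. The remedy is the clean case split on whether $g(w_1)=g(w_2)$, which turns the argument into an exact bookkeeping of changing edges rather than requiring any global optimality transfer. Part $1$ is structurally simpler because the suppression of $u$ can be compensated for by the Fitch-type swap ensuring $g(u)=f(x)=f(y)$ in any optimal extension.
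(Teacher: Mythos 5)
Your proof is correct, but it takes a genuinely different route from the paper's. The paper works entirely in the language of path systems: for part (1) it takes a maximum collection of $k$ edge-disjoint $A_f$-$B_f$-paths in $T$ (Proposition \ref{prop:menger0}) and shows it survives the type-1 reduction after minor surgery; for part (2) it invokes the strengthened Menger theorem (Proposition \ref{prop:menger2}) applied to the single path $P$ from $x$ to $y$ to force $P$ into a maximum vertex-disjoint collection $\mathcal{Q}$, so that $\mathcal{Q}\setminus\{P\}$ yields the $k-1$ paths in $T^2$. You instead argue directly on extensions, normalizing $g(u)=f(x)=f(y)$ in the type-1 case and doing an exact local count of changing edges on the handful of edges that differ between $T$ and $T^i$ in the type-2 case. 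Your approach is more elementary -- it avoids Proposition \ref{prop:menger2} entirely -- and it actually proves the stronger exact identities $l(f,T)=l(f^1,T^1)$ and $l(f,T)=l(f^2,T^2)+1$ for \emph{every} character with the respective agreement/disagreement at the cherry, which would also give Lemma \ref{lem:cherryred2} essentially for free; the paper's approach has the advantage of staying within the Menger-style machinery that the rest of the manuscript is built on. One small expository point: in part (1) you state the equality $l(f,T)=l(f^1,T^1)$ after exhibiting only the restriction direction in detail; to be airtight you should add the one-line converse, namely that any extension of $f^1$ on $T^1$ lifts to an extension of $f$ on $T$ with the same changing number by setting $g(u)=f(y)$.
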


\begin{proof}  
We prove both parts of the lemma separately.

\begin{enumerate}
\item Let $f(x)=f(y)$. As $f \in A_k(T)$, we know $l(f,T)=k$, which by Proposition \ref{prop:menger0} implies that there are $k$ edge-disjoint $A_f$-$B_f$-paths in $T$. This set of edge-disjoint $A_f$-$B_f$-paths naturally corresponds to a set of $k$ edge-disjoint paths in $T^1$ by deleting one leaf of cherry $[x,y]$, say $x$, and suppressing the resulting degree-2 vertex. This procedure might shorten one of the paths by two edges (if it ended in $x$) or by one edge (if it ended in $y$), but as $f(x)=f(y)$, we know that none of the original $A_f$-$B_f$-paths connected $x$ and $y$, so as $n \geq 4$, if such a path ended in $x$ or $y$, it must have had a length of at least 3. So the shortening does not actually delete a path, and if one of the paths in $T$ actually ended in $x$, we can extend it to $y$ in $T^1$ in order to make it a leaf-to-leaf path again. All this shows that in $T^1$, there are $k$ edge-disjoint $A_{f^1}$-$B_{f^1}$-paths. As the deletion of a leaf obviously cannot increase the maximum number of such paths, this implies that the maximum number of edge-disjoint $A_{f^1}$-$B_{f^1}$-paths in $T^1$ is indeed $k$, implying that $l(f^1,T^1)=k$ and thus $f^1\in A_k(T^1)$, which completes the proof of the first assertion.

\item We now consider the case $f(x)\neq f(y)$. As $f \in A_k(T)$, we have $l(f,T)=k$, which by Corollary \ref{cor:edge=vertex} combined with Theorem \ref{thm:menger3} (Menger's Theorem) shows that $A_f$ and $B_f$ cannot be separated by fewer than $k$ vertices. Now, let $P$ be the unique path connecting $x$ and $y$ in $T$. As $f(x)\neq f(y)$, $P$ is an $A_f$-$B_f$-path. We now apply Proposition \ref{prop:menger2} to $\mathcal{P} = \{P\}$ and conclude that there exists a set $\mathcal{Q}$ of $k$ vertex-disjoint $A_f$-$B_f$-paths such that $x$ and $y$ are endpoints of some of the paths. However,  as $\left[x,y\right]$ is a cherry of $T$, this is only possible if $P\in \mathcal{Q}$ (otherwise the two paths ending in $x$ and $y$ would both employ the vertex adjacent to both $x$ and $y$, which would contradict their vertex-disjointness). This immediately implies that when we delete cherry $[x,y]$ by a cherry reduction of type 2 to get $T^2$, $\mathcal{Q}\setminus \{P\}$ is a set of $k-1$ edge-disjoint $A_{f^2}$-$B_{f^2}$-paths in $T^2$. Moreover, this must be the maximum number of edge-disjoint $A_{f^2}$-$B_{f^2}$-paths in $T^2$, because every collection of at least $k$ edge-disjoint paths of $T^2$ combined with $P$ would give a collection of at least $k+1$ edge-disjoint paths in $T$, a contradiction to $k$ being the maximum number of edge-disjoint $A_f$-$B_f$-paths in $T$ by Proposition \ref{prop:menger0}. Thus, we can conclude $l(f^2,T^2)=k-1$ (again by Proposition \ref{prop:menger0}), which shows that $f^2 \in A_{k-1}(T^2)$. This completes the proof of the second assertion.
\end{enumerate}
\end{proof}

Note that the previous lemma shows that some elements of $A_k(T^1)$ and $A_{k-1}(T^2)$ can be derived from the elements of $A_k(T)$. Indeed, the following lemma shows that \textit{all} elements of $A_k(T^1)$ and $A_{k-1}(T^2)$ can be derived from the elements of $A_k(T)$ in this way. Together, Lemma \ref{lem:cherryred1} and Lemma \ref{lem:cherryred2} imply that the opposite is also true, i.e., the elements of $A_k(T)$ can be derived from the elements of $A_k(T^1)$ and $A_{k-1}(T^2)$, respectively.

\begin{lemma}\label{lem:cherryred2} Let $k \in \mathbb{N}_{\geq 1}$, $n \in \mathbb{N}_{\geq 4}$ and let $X=\{1,\ldots,n\}$. Let $T$ be a binary  phylogenetic $X$-tree. Let $T^1$ and $T^2$ the two trees derived from $T$ when performing a cherry reduction of type 1 or 2, respectively, to cherry $[x,y]$ of $T$. Then, we have:
\begin{enumerate}
\item If $g \in A_k(T^1)$, then there exists precisely one character $f \in A_k(T)$ with $f(x)=f(y)$ and $f^1=g$. Moreover, we have $f(x)= f(y)$.
\item If $h \in A_{k-1}(T^2)$, then there exist precisely two characters $f_1$ and $f_2$ in $A_k(T)$ with $f_i(x)\neq f_i(y)$ (for $i=1,2$) and  $f_1^2=f_2^2=h$. Moreover, we have $f_i(x)\neq f_i(y)$ for $i=1,2$.
\end{enumerate}
\end{lemma}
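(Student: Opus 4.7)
My approach is to build the claimed extensions explicitly, verify their parsimony score using Propositions \ref{prop:menger0} and \ref{prop:menger1}, and handle uniqueness via Lemma \ref{lem:cherryred1}. Throughout, let $u$ denote the cherry parent of $[x,y]$ in $T$ and $w$ the third neighbor of $u$.

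For Part 1, the natural candidate is the extension $f$ defined by $f(x)=g(y)$, so that $f(x)=f(y)$. To show $l(f,T)=k$, I would check both inequalities. For $l(f,T)\leq k$, I would take a most parsimonious extension $g'$ of $g$ on $T^1$ and lift it to $T$ by assigning $u$ the common state $f(x)=f(y)$: then the two cherry edges are non-changing, and the edge $\{u,w\}$ is changing in $T$ if and only if the suppressed edge $\{y,w\}$ was changing in $T^1$; all remaining edges are shared between $T$ and $T^1$, so the changing number is preserved. For $l(f,T)\geq k$, I would invoke Lemma \ref{lem:cherryred1}(1) contrapositively: an extension attaining $l(f,T)<k$ would descend via a type-1 cherry reduction to an extension of $g=f^1$ on $T^1$ of the same cost, contradicting $g\in A_k(T^1)$. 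For uniqueness among characters extending $g$, I would appeal to Lemma \ref{lem:cherryred1}(2): any alternative $\tilde f$ with $\tilde f(x)\neq \tilde f(y)$ and $\tilde f\in A_k(T)$ must satisfy $\tilde f^2\in A_{k-1}(T^2)$, and a careful cut-set argument comparing a minimum $A_g$-$B_g$-cut in $T^1$ passing through $w$ with the induced cut in $T^2$ should derive a contradiction.

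For Part 2, the two candidates are the extensions $f_1,f_2$ of $h$ with $(f_i(x),f_i(y))\in\{(a,b),(b,a)\}$. For the lower bound $l(f_i,T)\geq k$ I would take a maximum collection of $k-1$ edge-disjoint $A_h$-$B_h$-paths in $T^2$ guaranteed by Proposition \ref{prop:menger0}, lift each one to $T$ by replacing the suppressed edge $\{w_1,w_2\}$ (when traversed) with the two-edge path through $w$, and then augment the collection with the new leaf-to-leaf path $x$-$u$-$y$, whose endpoints lie in opposite states under $f_i$; this clearly yields $k$ edge-disjoint $A_{f_i}$-$B_{f_i}$-paths. For the upper bound I would use Proposition \ref{prop:menger1}: a minimum $A_h$-$B_h$-cut of size $k-1$ in $T^2$ lifts to an $A_{f_i}$-$B_{f_i}$-cut of size $k$ in $T$ by appropriately adjusting at $\{w_1,w_2\}$ and adjoining the single edge $\{u,x\}$ (or $\{u,y\}$). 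For uniqueness I would apply Lemma \ref{lem:cherryred1}(1) to the other two candidates (those with $f(x)=f(y)$): such an $f\in A_k(T)$ would force $f^1\in A_k(T^1)$, and combining this with $f^2=h\in A_{k-1}(T^2)$ should yield a parsimony-score contradiction via a path-counting argument at $u$.

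The main obstacle I foresee in both parts is the uniqueness direction. Existence reduces to calibrated path-or-cut bookkeeping between $T$ and its cherry reduction, essentially mirroring the proof of Lemma \ref{lem:cherryred1}. Ruling out the \enquote{wrong parity} extensions is more delicate: I must exhibit, for each such candidate, a concrete parsimony obstruction, which I expect to depend on the behavior of most parsimonious extensions of $g$ or $h$ at the vertex $w$ adjacent to the cherry. This is likely where the proof will demand the most care.
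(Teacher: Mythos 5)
Your existence arguments are sound and essentially follow the paper's route (the paper argues more directly that equal cherry states force no change on the cherry, so $l(f,T)=l(g,T^1)$, and that unequal states force exactly one extra change, so $l(f_i,T)=l(h,T^2)+1$; your Menger-style lifting of path collections and cut sets proves the same thing, and your use of Lemma \ref{lem:cherryred1}(1) for the lower bound in Part 1 is valid). The genuine gap is exactly where you suspected it: in both parts the uniqueness direction is left as a cut-set or path-counting argument that ``should derive a contradiction,'' and none is supplied. More importantly, this gap cannot be closed, because the ``precisely one''/``precisely two'' claims are false as stated. Take $T$ to be the binary tree on $\{x,y,3,4,5,6\}$ with cherries $[x,y]$, $[3,4]$, $[5,6]$, where the cherry $[x,y]$ hangs off the central vertex by an extra internal edge, so that $T^2$ is the quartet tree $34\vert 56$ and $T^1$ attaches $y$ directly to the central vertex. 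Let $k=2$ and $g(y)=g(5)=a$, $g(3)=g(4)=g(6)=b$. Then $g\in A_2(T^1)$, but \emph{both} extensions of $g$ to $x$ lie in $A_2(T)$: the one with $f(x)=f(y)=a$ has $A_f=\{x,y,5\}$ and parsimony score $2$, while the one with $f(x)=b\neq a=f(y)$ has $A_f=\{y,5\}$ and also score $2$. This refutes Part 1's ``precisely one.'' The same tree refutes Part 2's ``precisely two'': $h=g\vert_{\{3,4,5,6\}}$ lies in $A_1(T^2)$, yet three of its four extensions to $X$ (all except the one with $f(x)=f(y)=b$) have score $2$ on $T$.

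You should not feel too bad about this, because the paper's own proof handles uniqueness with one-line assertions that fail on the same example: in Part 1 it claims the wrong-parity extension ``would cause one more change, and thus $l(\widehat f,T)=k+1$,'' and in Part 2 it claims ``there is no other character $\widehat f$ with $\widehat f^2=h$.'' Attaching a leaf in the state opposite to its cherry sibling raises the parsimony score by one only when the sibling's state is still ``needed'' at that location; if a minimum cut can already isolate the sibling, the score is unchanged, which is what happens above. The way to salvage your write-up is to observe that only the existence halves are ever used later: Corollary \ref{cor:cherryred} and Proposition \ref{prop:nni} need only that the equal-state extension of $g$ and both unequal-state extensions of $h$ land in $A_k(T)$. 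So prove exactly what you have already proved, and drop the word ``precisely.''
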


\begin{proof} 
We again prove both assertions separately. 
\begin{enumerate} 
\item Let $g \in A_k(T^1)$ and let $f$ be the character as uniquely defined by Lemma \ref{lem:cherryred2}(1), i.e. $f^1=g$ and $f(x)=f(y)$. Clearly, as $f(x)=f(y)$, no most parsimonious extension of $f$ will ever require a change on cherry $[x,y]$, which shows that $l(f,T)=l(g,T^1)=k$ and thus $f \in A_k(T)$. 

\item Let $h\in A_{k-1}(T^2)$ and let $f_1$ and $f_2$ be as described in Lemma \ref{lem:cherryred2}(2), i.e., $f_1^2=f_2^2=h$ and $f_i(x)\neq f_i(y)$ for $i=1,2$. As the two leaves $x$ and $y$ are in different states in both $f_1$ and $f_2$ and form a cherry in $T$, every most parsimonious extension will require a change on this cherry. This shows that $l(f_1,T)=l(f_2,T)=l(h,T^2)+1=(k-1)+1=k$. Thus, $f_1,f_2 \in A_k(T)$. As clearly there is no other character $\widehat{f}$ on $X$ for which $\widehat{f}$ restricted to $X\setminus \{x,y\}$ equals $h$ and $\widehat{f}^2$ is defined  (i.e., $\widehat{f}(x)\neq \widehat{f}(y)$), this completes the proof.

\end{enumerate}
\end{proof}

The following corollary is a simple conclusion from Lemma \ref{lem:cherryred2}, which will be useful later on.

\begin{corollary}\label{cor:cherryred} Let $k \in \mathbb{N}_{\geq 2}$. Let $T$ be a binary phylogenetic $X$-tree and let $\widetilde{T}$ be an NNI neighbor of $T$. Let $[x,y]$ be a cherry contained in both $T$ and $\widetilde{T}$ such that $A_k(T^1)=A_k(\widetilde{T}^1)$ and $A_{k-1}(T^2)=A_{k-1}(\widetilde{T}^2)$, where $T^1$ and $\widetilde{T}^1$ as well as $T^2$ and $\widetilde{T}^2$ result from $T$ and $\widetilde{T}$ by cherry reductions of types 1 and 2, respectively, using cherry $[x,y]$. Then, we have $A_k(T) = A_k(\widetilde{T})$.
\end{corollary}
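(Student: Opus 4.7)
The plan is to show both inclusions $A_k(T) \subseteq A_k(\widetilde{T})$ and $A_k(\widetilde{T}) \subseteq A_k(T)$ by using the two characterizations already established. The key observation is that Lemmas \ref{lem:cherryred1} and \ref{lem:cherryred2}, taken together, show that $A_k(T)$ is completely determined by the two smaller alignments $A_k(T^1)$ and $A_{k-1}(T^2)$: each character $g \in A_k(T^1)$ lifts uniquely to a character in $A_k(T)$ with $f(x)=f(y)$, and each character $h \in A_{k-1}(T^2)$ lifts to exactly the two characters in $A_k(T)$ with $f(x)\neq f(y)$. So I would phrase the proof around the natural partition of $A_k(T)$ according to whether $f(x)=f(y)$.

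Concretely, I would take an arbitrary $f \in A_k(T)$ and split into two cases. If $f(x)=f(y)$, Lemma \ref{lem:cherryred1}(1) places $f^1$ in $A_k(T^1)$, which by hypothesis equals $A_k(\widetilde{T}^1)$. Then applying Lemma \ref{lem:cherryred2}(1) on $\widetilde{T}$ produces a unique $\widetilde{f} \in A_k(\widetilde{T})$ with $\widetilde{f}^1 = f^1$ and $\widetilde{f}(x) = \widetilde{f}(y)$; but this $\widetilde{f}$ is forced to agree with $f$ on every taxon (it matches $f$ on $X \setminus \{x\}$ by construction, and on $x$ because both assign it the value of $f(y)=\widetilde{f}(y)$). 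Hence $f \in A_k(\widetilde{T})$. If instead $f(x) \neq f(y)$, Lemma \ref{lem:cherryred1}(2) gives $f^2 \in A_{k-1}(T^2)=A_{k-1}(\widetilde{T}^2)$, and Lemma \ref{lem:cherryred2}(2) on $\widetilde{T}$ produces exactly two lifts in $A_k(\widetilde{T})$ with $\widetilde{f}(x)\neq \widetilde{f}(y)$, namely the two possible assignments of opposite states to the cherry. Since $f$ is one such extension of $f^2$, it must coincide with one of these two lifts, so again $f \in A_k(\widetilde{T})$.

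The reverse inclusion follows by the same argument with the roles of $T$ and $\widetilde{T}$ interchanged, which is legitimate because $[x,y]$ is a cherry of both trees and the hypothesis $A_k(T^i)=A_k(\widetilde{T}^i)$ is symmetric in $T$ and $\widetilde{T}$ for $i=1,2$. There is no real obstacle here: the substantive content is all in Lemmas \ref{lem:cherryred1} and \ref{lem:cherryred2}, and the corollary is essentially a bookkeeping step that reassembles $A_k(T)$ from its two pieces. The only point requiring minor care is that the cherry reductions $T^1$ and $\widetilde{T}^1$ (and likewise $T^2$ and $\widetilde{T}^2$) must be formed from the \emph{same} cherry $[x,y]$, which is exactly what the statement assumes, and that the unique lift in Case 1 really is $f$ itself, which is immediate from the uniqueness clause of Lemma \ref{lem:cherryred2}(1). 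Notably, the NNI hypothesis plays no direct role in the proof of this corollary; it will presumably be used later to guarantee that a common cherry exists.
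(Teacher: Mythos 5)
Your proof is correct and follows essentially the same route as the paper: a case split on whether $f(x)=f(y)$, descending via Lemma \ref{lem:cherryred1}, transferring via the hypothesis, and lifting back via Lemma \ref{lem:cherryred2} (the paper merely phrases this as a proof by contradiction rather than as two inclusions). Your extra remarks -- that the lift must coincide with $f$ itself and that the NNI hypothesis is not actually used -- are both accurate.
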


\begin{proof} Assume the statement is not true, i.e., assume $A_k(T) \neq A_k(\widetilde{T})$. Let $f \in A_k(T) \setminus A_k(\widetilde{T})$. If we have $f(x)=f(y)$, we perform a cherry reduction of type 1 with $[x,y]$ in $T$ and $\widetilde{T}$ and, by Lemma \ref{lem:cherryred1},  find that $f^1 \in A_k(T^1)=A_k(\widetilde{T}^1)$. On the other hand, $f^1 \in A_k(\widetilde{T}^1)$ implies $f \in A_k(\widetilde{T})$ by Lemma \ref{lem:cherryred2}, a contradiction to the choice of $f$. 

So we must have $f(x)\neq f(y)$. We perform a cherry reduction of type 2 with $[x,y]$ in $T$ and $\widetilde{T}$ and, again by Lemma \ref{lem:cherryred1}, find that $f^2 \in A_{k-1}(T^2)=A_{k-1}(\widetilde{T}^2)$. On the other hand, $f^2 \in A_{k-1}(\widetilde{T}^2)$ implies $f \in A_k(\widetilde{T})$ by Lemma \ref{lem:cherryred2}, again a contradiction to the choice of $f$. 

As both cases lead to a contradiction, we cannot choose such an $f$, which shows $A_k(T)=A_k(\widetilde{T})$.
\end{proof}

Before we can finally prove Theorem \ref{thm:boundNNI}, we need to establish one more preliminary result. In fact, the following proposition turns out to be the main ingredient in our proof. It characterizes all trees  $\widetilde{T}$ in the NNI neighborhood of a binary  phylogenetic tree $T$ for which we have $A_k(T) = A_k(\widetilde{T})$.

\begin{proposition}\label{prop:nni}
Let $k \in \mathbb{N}_{\geq 2}$. Let $T$ be a binary phylogenetic $X$-tree with $n=\vert X\vert \geq 4$ and $A\vert B \in \Sigma^\ast(T)$ inducing subtrees $T_{A_1}$ and $ T_{A_2}$, whose leaves are subsets of $A$, as well as $T_{B_1}$ and $ T_{B_2}$, whose leaves are subsets of $B$, cf. Figure \ref{fig:treedecomp}. Let $n_1 = \vert A_1\vert$, $n_2 = \vert A_2\vert$, $n_3 = \vert B_1\vert$ and $n_4 = \vert B_2\vert$. Moreover, let $\widetilde{T}$ be the tree obtained from $T$ by exchanging $T_{A_2}$ with $T_{B_2}$ (i.e., $T$ and $\widetilde{T}$ are NNI neighbors). Set $s(T,\widetilde{T}) = \sum\limits_{i=1}^4 \left\lfloor \frac{n_i-1}{2}\right\rfloor$. Then, we have: 
\begin{align*} A_k(T) = A_k(\widetilde{T}) & \ \ \Longleftrightarrow \ \  
s(T,\widetilde{T}) < k-2.
\end{align*}
\end{proposition}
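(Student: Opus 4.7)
The plan is to localise the analysis to the edge $\{u,v\}$ on which the NNI move is performed and the four subtrees $T_1:=T_{A_1}, T_2:=T_{A_2}, T_3:=T_{B_1}, T_4:=T_{B_2}$ hanging off it. For a character $f$ on $X$, let $f_i$ denote its restriction to the leaves of $T_i$, set $\ell_i:=l(f_i,T_i)$, and let $\alpha_i,\beta_i\in\{\ell_i,\ell_i+1\}$ be the minimum cost of an extension of $f_i$ on $T_i$ in which the attachment vertex $v_i$ of $T_i$ is labelled $a$, respectively $b$. Call $T_i$ \emph{forced} if $\alpha_i\neq\beta_i$, and \emph{flexible} otherwise; the type $t_i\in\{a,b,*\}$ records the forced state (or $*$). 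Absorbing each pendant edge $\{u,v_i\}$ or $\{v,v_i\}$ into the subtree cost (valid because $|\alpha_i-\beta_i|\le 1$), one obtains the decompositions $l(f,T)=\sum_i\ell_i+c_T$ and $l(f,\widetilde{T})=\sum_i\ell_i+c_{\widetilde{T}}$, where $c_T$ and $c_{\widetilde{T}}$ are the minima over $(s_u,s_v)\in\{a,b\}^2$ of the combined ``excess'' costs plus $\mathbf{1}[s_u\neq s_v]$ and depend only on the type pattern $(t_1,t_2,t_3,t_4)$ and on the layout ($T$ has $\{T_1,T_2\}$ on the $u$-side and $\{T_3,T_4\}$ on the $v$-side, while $\widetilde{T}$ has $\{T_1,T_4\}$ on the $u$-side and $\{T_2,T_3\}$ on the $v$-side).

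The crucial structural claim I would prove next is that $c_T=c_{\widetilde{T}}$ unless all four $T_i$ are forced, and in the all-forced case $c_T\neq c_{\widetilde{T}}$ happens precisely for those patterns in which the $a$-versus-$b$ partition of the $T_i$ coincides with either the $T$-partition $\{T_1,T_2\}\mid\{T_3,T_4\}$ (giving $c_T=1$, $c_{\widetilde{T}}=2$) or the $\widetilde{T}$-partition $\{T_1,T_4\}\mid\{T_2,T_3\}$ (giving $c_T=2$, $c_{\widetilde{T}}=1$); in all other all-forced patterns $c_T=c_{\widetilde{T}}$. The ``flexible implies equal'' half follows by suppressing any flexible subtree: using $|\alpha_j-\beta_j|\le 1$ one verifies that $\min_{s_v}[\mathrm{cost}_j(s_v)+\mathbf{1}[s_u\neq s_v]]=\mathrm{cost}_j(s_u)$ on the flexible side, which reduces both $c_T$ and $c_{\widetilde{T}}$ to the same ``star'' minimum $\min_s\sum_{i\text{ forced}}(\mathrm{cost}_i(s)-\ell_i)$. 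The all-forced half is a short enumeration over the six $2{+}2$ partitions (the $3{+}1$ and $4{+}0$ patterns immediately give $c_T=c_{\widetilde{T}}$). Keeping this case analysis organised is the main technical obstacle.

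A further preparation is the bound $\ell_i\le\lfloor(n_i-1)/2\rfloor$ whenever $T_i$ is forced. This follows by attaching a pendant leaf at $v_i$ labelled with the opposite of the forced state: by the definition of forced, the augmented tree $T_i^+$ on $n_i+1$ leaves has parsimony exactly $\ell_i+1$, which is bounded by $\lfloor(n_i+1)/2\rfloor$ via Proposition \ref{prop:menger0} and Lemma \ref{lem:Fischer}. Conversely, for any target $\ell\in\{0,\ldots,\lfloor(n_i-1)/2\rfloor\}$ I would construct a character on $T_i$ with parsimony $\ell$ and root forced to a prescribed state (say $a$) as follows: apply Lemma \ref{lem:leafchoice} to $T_i^+$ (with the pendant labelled $b$ at $v_i$) to obtain $\ell+1$ edge-disjoint leaf-to-leaf paths with one ending at the pendant; assign opposite states to the two endpoints of each path and state $a$ to every remaining leaf. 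Propositions \ref{prop:menger0}--\ref{prop:menger1} then confirm that the augmented character has parsimony exactly $\ell+1$ (the $\ell+1$ edge-disjoint paths give the lower bound, and cutting the edge incident to each $b$-labelled leaf gives the matching cut set), so the restriction to $T_i$ has parsimony $\ell$ and $v_i$ forced to $a$.

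With these ingredients in hand, both directions are short. For $(\Leftarrow)$, assume $s(T,\widetilde{T})<k-2$ and suppose for contradiction that some character $f$ lies in exactly one of $A_k(T),A_k(\widetilde{T})$; then $c_T\neq c_{\widetilde{T}}$, so by the structural claim all four $T_i$ are forced, hence $\sum_i\ell_i\le s(T,\widetilde{T})$ by the bound, while $l(f,T)=k$ together with $c_T\in\{1,2\}$ forces $\sum_i\ell_i\ge k-2>s(T,\widetilde{T})$, a contradiction. For $(\Rightarrow)$, assume $s(T,\widetilde{T})\ge k-2$; distribute the value $k-2$ among integers $\ell_i\in[0,\lfloor(n_i-1)/2\rfloor]$, use the construction above to produce each $f_i$ on $T_i$ with parsimony $\ell_i$ and root types set to the pattern $(a,b,b,a)$ (the $\widetilde{T}$-partition yielding $c_T=2$, $c_{\widetilde{T}}=1$), and glue the $f_i$ into a character $f$ on $X$; then $l(f,T)=(k-2)+2=k$ and $l(f,\widetilde{T})=(k-2)+1=k-1$, exhibiting $f\in A_k(T)\setminus A_k(\widetilde{T})$ and hence $A_k(T)\neq A_k(\widetilde{T})$.
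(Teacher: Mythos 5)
Your argument is essentially correct, but it takes a genuinely different route from the paper. The paper proves Proposition \ref{prop:nni} by a double induction built on cherry reductions: the direction $s\geq k-2\Rightarrow A_k(T)\neq A_k(\widetilde{T})$ is an induction on $k$ (base case $k=2$ via Proposition \ref{prop:A1A2}, step via a type-2 cherry reduction inside a subtree with $n_i\geq 3$ and a lift of a distinguishing character through Lemma \ref{lem:cherryred2}), and the converse is an induction on $s$ with a base case $s=0$ and a parity case distinction on $n_i$, glued together by Corollary \ref{cor:cherryred}. You instead give a direct, non-inductive argument: a Sankoff-style decomposition $l(f,T)=\sum_i\ell_i+c_T$ at the NNI edge, a complete characterization of when $c_T\neq c_{\widetilde{T}}$ (all four subtrees forced, with the forced-state partition matching one of the two $2{+}2$ layouts), the bound $\ell_i\leq\lfloor(n_i-1)/2\rfloor$ for forced subtrees, and a matching construction via Lemma \ref{lem:leafchoice}. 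Your route is arguably more informative: it shows that $\lvert l(f,T)-l(f,\widetilde{T})\rvert\leq 1$ for \emph{every} binary character and pinpoints exactly which characters distinguish two NNI neighbors, which is stronger than the bare set inequality and could be relevant to the paper's open Question at the end. The paper's route is less computation-heavy and reuses machinery (Lemmas \ref{lem:cherryred1} and \ref{lem:cherryred2}) that it needs anyway.

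Two spots need tightening, though both are repairable with objects you already have. First, for a single-leaf subtree the raw values $\alpha_i,\beta_i$ are $0$ and $\infty$ rather than elements of $\{\ell_i,\ell_i+1\}$; you should define forcedness and the excesses $e_i(s)$ via the edge-absorbed cost $\min_{s'}[\mathrm{cost}_i(s')+\mathbf{1}[s\neq s']]$, after which everything you claim holds uniformly. Second, and more substantively, the statement that $l(f_i^+,T_i^+)=\ell+1$ by itself implies that the restriction to $T_i$ has score $\ell$ with $v_i$ forced to $a$ is not valid: the identity $\min(\alpha+1,\beta)=\ell+1$ is also satisfied by $\alpha=\beta=\ell+1$, i.e.\ by a flexible subtree of the wrong score. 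You must argue directly on $T_i$: the $\ell$ paths among your $\ell+1$ that avoid the pendant give $\alpha\geq\ell$; the truncation of $P_1$ to a $v_i$-to-($a$-leaf) path, edge-disjoint from the others, gives $\beta\geq\ell+1$; and the cut set consisting of the $\ell$ pendant edges at the $b$-labelled leaves of $T_i$ realizes an extension of cost $\ell$ with $v_i$ in state $a$, so $\alpha=\ell$. With those repairs the proof goes through.
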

Before we continue with the proof of this proposition, we first analyze $s(T,\widetilde{T})$ a bit more in-depth. As by definition, $s(T, \widetilde{T})$ formally depends both on $T$ and the specific $X$-split $A\vert B$, the notation  $s(T, \widetilde{T})$ suggesting that $s$ depends on $T$ and $\widetilde{T}$ may seem counter-intuitive at first. However, if $\widetilde{T}$ is obtained from $T$ by a single NNI move performed on some inner edge $e$ of $T$, then $T$, $\widetilde{T}$ and the $X$-split $A\vert B$ induced by $e$ fulfill the assumptions of Proposition \ref{prop:nni}. So in fact,  $s(T, \widetilde{T})$ depends only on the specific NNI move performed on $T$ to obtain $\widetilde{T}$, which justifies the notation.

\begin{proof} In the following, whenever there is no ambiguity, we refer to $s(T,\widetilde{T})$ simply as $s$. Now, we subdivide the proof into two parts, one for each direction of the statement.

\begin{enumerate}
\item In order to show that $s\geq k-2$ implies $A_k(T)\neq A_k(\widetilde{T})$, we use induction on $k$ and assume $s\geq k-2$.  

Note that for $k=2$, by Proposition \ref{prop:A1A2}, there is nothing to show as in this case, we already know that $T \not\cong \widetilde{T}$ implies $A_k(T) \neq A_k(\widetilde{T})$. This completes the base case of the induction. Therefore, in the following we may assume $k \geq 3$ and that for all pairs $T^\prime_1$, $T^\prime_2$ of binary  phylogenetic $X$-trees that are NNI neighbors with $s(T^\prime_1,T^\prime_2) \geq (k-1)-2=k-3$, we already know that $A_{k-1}(T^\prime_1) \neq A_{k-1}(T^\prime_2)$. Therefore, for any such pair we can assume without loss of generality that there is a character $h \in A_{k-1}(T^\prime_1)\setminus A_{k-1}(T^\prime_2)$ (else we may swap the roles of $T^\prime_1$ and $T^\prime_2$).  
Our aim now is to construct a character $f \in A_k(T)\setminus A_k(\widetilde{T})$, which will, in turn, imply that $A_k(T) \neq A_k(\widetilde{T})$.

So now we have trees $T$ and $\widetilde{T}$ which are NNI neighbors with $s=s(T,\widetilde{T})\geq k-2$. Note that if we had $s=0$, the fact that $s \geq k-2$ would imply $k =2$; i.e., this would refer to the base case of the induction, which we have already considered. Thus, we can now assume $s>0$. By the definition of $s$, this  implies that there is an $i \in \{1,2,3,4\}$ such that $n_i \geq 3$ (because at least one of the summands $\left \lfloor \frac{n_i-1}{2} \right\rfloor$ needs to be at least 1). Without loss of generality, assume $n_1\geq 3$. Then, $T_{A_1}$ contains a cherry $[x,y]$ whose deletion by a cherry reduction of type 2 does not eradicate $T_{A_1}$ (i.e., the remaining tree after the cherry reduction is non-empty). Note that as $T$ and $\widetilde{T}$ differ only in one NNI move (and thus $T_{A_1}$ is subtree in $T$ as well as in $\widetilde{T}$), both trees necessarily contain cherry $[x,y]$. If we denote by $T^\prime$ and $\widetilde{T}^\prime$ the trees resulting from a cherry reduction of type 2 performed on  $[x,y]$, it is clear that  $T^\prime$ and $\widetilde{T}^\prime$ are also NNI neighbors. Moreover, as (compared to $T$ and $\widetilde{T}$) only $n_1$ was reduced by 2 and all other $n_i$ remained unchanged, we have 

\begin{align*} s(T^\prime,\widetilde{T}^\prime) = \left\lfloor \frac{(n_1-2)-1}{2}\right\rfloor+\sum\limits_{i=2}^4 \left\lfloor \frac{n_i-1}{2}\right\rfloor&= \left\lfloor \frac{n_1-1}{2}\right\rfloor-1+\sum\limits_{i=2}^4 \left\lfloor \frac{n_i-1}{2}\right\rfloor\\&=s(T,\widetilde{T})-1\geq k-3,\end{align*}

where the last inequality is due to $s(T,\widetilde{T})\geq k-2$. By the inductive hypothesis, this shows that there exists a character $h \in A_{k-1}(T^\prime)\setminus A_{k-1}(\widetilde{T}^\prime)$. By Lemma \ref{lem:cherryred2}, this implies the existence of two characters $f_1$ and $f_2$ with $f_1^2=f_2^2=h$ and $f_i(x)\neq f_i(y)$ for $i=1,2$ and such that $f_1, f_2 \in A_k(T)$ and $f_1, f_2 \not\in A_k(\widetilde{T})$ (as otherwise we would necessarily have $f_1^2, f_2^2 \in A_{k-1}(\widetilde{T}^{\prime})$).  

This implies that we have found two characters $f_1$ and $f_2$ with $f_1,f_2\in A_k(T)\setminus A_k(\widetilde{T})$, showing that $A_k(T)\neq A_k(\widetilde{T})$ and thus completing the first part of the proof.

\item Next, we need to show that if we have $k \in \mathbb{N}_{\geq 2}$ as well as two trees $T$ and $\widetilde{T}$ which are NNI neighbors with $s(T,\widetilde{T})<k-2$, we have $A_k(T)=A_k(\widetilde{T})$. We will prove this assertion by induction on $s$. In particular, we first show that the statement holds for $s=0$ and all values of $k \geq 3$. (Note that considering $k \geq 3$ indeed covers all values of $k$ that we have to consider, as we have $s\geq 0$, which together with $s<k-2$ implies $k\geq 3$.) Then, we proceed to prove that if the assertion holds for the combinations $s-1$ and $k-1$ as well as $s-1$ and $k$, it also holds for $s$ and $k$.

\begin{itemize}
\item We start with the base case $s=0$. In this case, as $0=s= \sum\limits_{i=1}^4 \left\lfloor \frac{n_i-1}{2}\right\rfloor$, we conclude that $1\leq n_i \leq 2$ for all $i=1,\ldots,4$. We now consider several subcases in order to show that either $A_k(T)=A_k(\widetilde{T})=\emptyset$ or that every element $f \in A_k(T)$ is also contained in $A_k(\widetilde{T})$. By symmetry, the latter will lead to $A_k(T)=A_k(\widetilde{T})$, showing that the two sets are equal in all cases. 

\begin{itemize}
\item If \textit{all} $n_i$ equal 1, we have $n_1+n_2+n_3+n_4=n=4$, i.e., $T$ and $\widetilde{T}$ both have four leaves. As $k \geq 3$ and as there are no binary characters $f$ on four taxa with $l(f,T)\geq 3$ (or $l(f,\widetilde{T})\geq 3$), we have $\emptyset=A_k(T)=A_k(\widetilde{T})$, so indeed both sets are equal. 

\item Next, assume that $n_i=2$ for at least one $i \in \{1,2,3,4\}$ and that $A_k(T) \neq \emptyset$. Let $f \in A_k(T)$. Note that every $i$ for which this is the case implies a taxon set $U_i \in \{A_1, A_2, B_1, B_2\}$ such that $T_{U_i}$ consists of a cherry $[x_i,y_i]$. Now assume all $i$ with $n_i =2$ induce a cherry $[x_i,y_i]$ with $f(x_i)=f(y_i)$. If this was the case, we could apply a cherry reduction of type 1 to $T$ to the first of these cherries to derive a character $f^1 \in A_k(T^1)$ according to Lemma \ref{lem:cherryred1}. Repeating this step for all $i$ with $n_i=2$, we would end up with a character $g \in A_k(T^*)$, where $T^*$ is the four taxon tree resulting from $T$ by iteratively performing cherry reductions of type 1 to the cherries induced by $n_i=2$. However, as $k\geq 3$ and as $T^*$ has only four leaves, as above we know that $A_k(T^*)=\emptyset$, so such a character $g$ cannot exist, which would be a contradiction. 
\item By the above considerations, we know that if we have at least one $i \in \{1,2,3,4\}$ such that $n_i=2$ and $f \in A_k(T)\neq \emptyset$, there must be $U\in \{A_1, A_2, B_1, B_2\}$ with $U = \{x,y\}$ (i.e., $T_U$ consists of the cherry $[x,y]$) such that $f(x) \not= f(y)$. Note that by assumption, $[x,y]$ is a cherry of $\widetilde{T}$, too. We now perform a cherry reduction of type 2 to $[x,y]$, both for $T$ and $\widetilde{T}$. However, note that this eliminates subtree $T_U$ completely (as it only consists of cherry $[x,y]$) and thus leads to $T^2\cong \widetilde{T}^2$. In particular, we have $A_{k-1}(T^2)=A_{k-1}(\widetilde{T}^2)$. However, as $f^2 \in A_{k-1}(T^2)=A_{k-1}(\widetilde{T}^2)$, we know by Lemma \ref{lem:cherryred2} that $f^2$ corresponds to precisely two characters in $A_k(T)$ and $A_k(\widetilde{T})$, namely the two characters that assign different states to $x$ and $y$ and otherwise agree with $f^2$. Clearly, one of these characters is $f$, so we have $f \in A_k(\widetilde{T})$. 
\end{itemize}
So in all possible cases, either $A_k(T) = \emptyset$ or the arbitrarily chosen character $f \in A_k(T)$ is also contained in $A_k(\widetilde{T})$, which shows $A_k(T)\subseteq A_k(\widetilde{T})$. Swapping the roles of $T$ and $\widetilde{T}$ shows the converse inclusion, too, so we conclude $A_k(T)=A_k(\widetilde{T})$ as desired. This completes the proof of the base case $s=0$.

\item We now consider the case $s>0$. We assume the assertion already holds for the pairs $(s-1,k-1)$ as well as $(s-1,k)$ and show that this implies it also holds for the pair $(s,k)$. 

As we are now considering the case $s>0$, we must have $n_i> 2$ for some $i\in \{1,2,3,4\}$. Moreover, it is important to note that with $s >0 $ and  $s<k-2$, we must have $k \geq 4$. 

First, we note that by $n_i> 2$ for some $i\in \{1,2,3,4\}$ the existence of some $T_U$ with $U \in \{A_1,A_2,B_1,B_2\}$ and $\vert U\vert = n_i > 2$ and thus also the existence of some cherry $[x,y]$ contained in $T_U$ (as well as in $T$ and $\widetilde{T}$) is implied.

We first show that for such a cherry $T^1$ and $\widetilde{T}^1$ fulfill the induction hypothesis. Then, using a case distinction, we will analyze in which case also $T^2$ and $\widetilde{T}^2$ fulfill the induction hypothesis and in which case another argument is needed.

As $n_i\geq 3$, a reduction does not completely eliminate $T_U$ (even if we perform a cherry reduction of type 2). So $T^1, T^2, \widetilde{T}^1, \widetilde{T}^2$ can be constructed by applying cherry reductions of types 1 and 2 to cherry $[x,y]$ both in $T$ and $\widetilde{T}$, respectively. Clearly, $\widetilde{T}^1$ is an NNI neighbor of $T^1$ and $\widetilde{T}^2$ is an NNI neighbor of $T^2$. Let $s^1=s(T^1,\widetilde{T}^1)$ and $s^2=s(T^2,\widetilde{T}^2)$. 

Now, we first show that $A_{k-1}(T^2) = A_{k-1}(\widetilde{T}^2)$. Using the fact that $\left \lfloor m-1 \right \rfloor = \left \lfloor m\right \rfloor - 1$ for every $m\in \mathbb{Z}$, we get
\begin{align*}s^2&=s-\left\lfloor \frac{n_i-1}{2} \right\rfloor + \left\lfloor \frac{n_i-3}{2} \right\rfloor\\&= s-\left\lfloor \frac{n_i-1}{2} \right\rfloor + \left(\left\lfloor \frac{n_i-1}{2} \right\rfloor -1\right) =s-1.\end{align*}
By the inductive hypothesis, we thus must have $A_{k-1}(T^2) = A_{k-1}(\widetilde{T}^2)$ (using the assumption on $(s-1,k-1)$).

Finally, we analyze in which case we additionally have $A_k(T^1) = A_k(\widetilde{T}^1)$ by the induction hypothesis and in which case we have to establish this equality by an additional argument.
 In order to do this, we distinguish the following  subcases.

\begin{enumerate}
    \item \label{firstcase} Suppose that additionally to $n_i>2$, we have that $n_i$ is odd. We already know $A_k(T^2) = A_k(\widetilde{T}^2)$. Using $\left\lfloor \frac{n_i-1}{2} \right\rfloor  = \frac{n_i-1}{2}=\left\lfloor \frac{n_i}{2} \right\rfloor$ as $n_i$ is odd, we then have \begin{align*}s^1&=s-\left\lfloor \frac{n_i-1}{2} \right\rfloor + \left\lfloor \frac{n_i-2}{2} \right\rfloor= s- \frac{n_i-1}{2} + \left(\left\lfloor \frac{n_i}{2} \right\rfloor-1 \right)=s-1.\end{align*}

By the inductive hypothesis, we thus must have $A_k(T^1)=A_k(\widetilde{T}^1 )$ (using the assumption on $(s-1,k)$).

Thus, with $A_k(T^1)=A_k(\widetilde{T}^1 )$ and  $A_{k-1}(T^2) = A_{k-1}(\widetilde{T}^2)$, we can use Corollary \ref{cor:cherryred} to conclude $A_k(T)=A_k(\widetilde{T})$.

\item Now,  additionally to $n_i>2$, we assume that $n_i$ is even. As $n_i$ is even, we have $\left\lfloor \frac{n_i-1}{2} \right\rfloor = \left\lfloor \frac{n_i-2}{2} \right\rfloor$, which implies $s^1=s$. So we cannot use the inductive assumption. However, note that if we denote by $n_i^1=n_i-1$ the subtree size of $U$ after the cherry reduction of type 1 in both $T$ and $\widetilde{T}$, then clearly, as $n_i$ is even, we have that $n_i^1$ is odd. This implies that still $n_i^1 > 2$ and we can find a cherry in the reduced subtree fulfilling the assumptions of Case (a). So we can actually apply Case (a) to $T^1$ and $\widetilde{T}^1$ (using $n_i^1$ instead of $n_i$) and conclude that $A_k(T^1)=A_k(\widetilde{T}^1)$. 

So as before, using Corollary 2, we conclude from $A_{k-1}(T^2)=A_{k-1}(\widetilde{T}^2)$ and 
$A_{k}(T^1)=A_{k}(\widetilde{T}^1)$ that $A_k(T)=A_k(\widetilde{T})$.
\end{enumerate}
\end{itemize}
So in all cases, we can conclude that $A_k(T)=A_k(\widetilde{T})$, which completes the second part of the proof.
\end{enumerate}
\end{proof}

Finally, we can now turn out attention to the proof of Theorem \ref{thm:boundNNI}.

\begin{proof}[Proof of Theorem \ref{thm:boundNNI}]
If $A_k(T)\neq A_k(\widetilde{T})$, obviously $T\not\cong \widetilde{T}$, so for the first direction, there is nothing to show. 

So now assume $T\not\cong \widetilde{T}$. We need to show that then, $A_k(T)\neq A_k(\widetilde{T})$. Using the inequality $\left\lfloor \frac{m}{2} \right\rfloor \geq \frac{m-1}{2}$, which holds for all $m \in \mathbb{Z}$, as well as the fact that $n_1+n_2+n_3+n_4=n$, we easily derive the following lower bound for $\sum\limits_{i=1}^4 \left\lfloor\frac{n_i-1}{2}\right\rfloor$:

\begin{align*}
\sum\limits_{i=1}^4 \left\lfloor\frac{n_i-1}{2}\right\rfloor &\geq \sum\limits_{i=1}^4 \frac{n_i-2}{2}=\frac{n_1+n_2+n_3+n_4-8}{2}=\frac{n}{2}-4>k-3,\\
\end{align*}

where the last inequality uses $n> 2k+2$ as assumed by Theorem \ref{thm:boundNNI}. As the left-most sum is an integer, in summary we get $\sum\limits_{i=1}^4 \left\lfloor\frac{n_i-1}{2}\right\rfloor \geq k-2$, which by Proposition \ref{prop:nni} implies $A_k(T)\neq A_k(\widetilde{T})$ and thus completes the proof.
\end{proof}

\subsection{Constructing cases with \texorpdfstring{$n=2k+2$}{n=2k+2} and non-unique \texorpdfstring{$A_k$}{Ak} alignments}

In the previous section, for the case $k\geq 3$ we have seen in Theorem \ref{thm:boundNNI} that if $n>2k+2$  and if $\widetilde{T}$ is in the NNI neighborhood of $T$, we can guarantee that we have $T \cong \widetilde{T}$ if and only if $A_k(T)=A_k(\widetilde{T})$ (note that for the cases $k\in \{1,2\}$, by Proposition \ref{prop:A1A2}, the same equivalence is guaranteed for all $n \in \mathbb{N}_{\geq 1}$, even outside the neighborhood of $T$). Moreover, in \cite{Fischer2019} for every $k>2$ a construction of two trees $T$ and $\widetilde{T}$ with $T \not\cong \widetilde{T}$ and $n=2k$ leaves was shown for which $A_k(T)=A_k(\widetilde{T})$. This leads to the natural question if there exist trees $T$ and $\widetilde{T}$ with $T \not\cong \widetilde{T}$ and with
$A_k(T)=A_k(\widetilde{T})$ if $n=2k+1$ or $n=2k+2$. It is the aim of this section to show that in both of these cases, there indeed exist such trees.

\begin{corollary}\label{cor:NNI1}
For every $k>2$ there exists a pair $T$ and $\widetilde{T}$, $T \not\cong \widetilde{T}$, of binary phylogenetic $X$-trees with $\vert X \vert \in \{2k+1,2k+2\}$ and $A_k(T) = A_k(\widetilde{T})$.
\end{corollary}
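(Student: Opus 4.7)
My plan is to apply Proposition \ref{prop:nni} directly. It suffices to exhibit, for each $k \geq 3$ and each $n \in \{2k+1, 2k+2\}$, a binary phylogenetic $X$-tree $T$ on $n$ leaves together with an internal edge whose four pendant subtree sizes $(n_1, n_2, n_3, n_4)$ satisfy $s(T,\widetilde{T}) = \sum_{i=1}^4 \lfloor (n_i-1)/2 \rfloor < k-2$. Then the tree $\widetilde{T}$ obtained by exchanging $T_{A_2}$ with $T_{B_2}$ is an NNI neighbor of $T$, hence non-isomorphic to $T$, and the converse direction of Proposition \ref{prop:nni} yields $A_k(T) = A_k(\widetilde{T})$.

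To minimise $s$ I would make as many $n_i$ as possible equal to $1$ or $2$, since $\lfloor (n_i-1)/2 \rfloor = 0$ for these values. Concretely, for $n = 2k+1$ I would take $(n_1, n_2, n_3, n_4) = (2, 2, 2, 2k-5)$: three cherries plus a subtree on the remaining $2k-5 \geq 1$ leaves (which degenerates to a single leaf when $k=3$). This yields $s = \lfloor (2k-6)/2 \rfloor = k-3$. For $n = 2k+2$ I would take $(n_1, n_2, n_3, n_4) = (2, 2, 2, 2k-4)$, giving $s = \lfloor (2k-5)/2 \rfloor = k-3$. In both cases $s = k-3 < k-2$, and all $n_i \geq 1$ for every $k \geq 3$, so these configurations are indeed realised by honest binary phylogenetic $X$-trees.

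It then remains to verify that the swap genuinely produces a non-isomorphic tree. This is immediate from Buneman's theorem: the inner edge $e$ induces the $X$-split $(A_1 \cup A_2) \vert (B_1 \cup B_2)$ in $T$ but the distinct $X$-split $(A_1 \cup B_2) \vert (B_1 \cup A_2)$ in $\widetilde{T}$, while every other split is preserved by the swap. The overall corollary is therefore a bookkeeping exercise on the parameters of Proposition \ref{prop:nni}, and I do not anticipate a real obstacle; the only place that needs mild care is verifying that the extremal configurations for small $k$ (in particular $k=3$, where the largest subtree may degenerate) still yield legitimate binary phylogenetic trees, which they do.
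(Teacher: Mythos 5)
Your proposal is correct and follows essentially the same route as the paper: both exhibit an NNI pair whose four pendant subtree sizes give $s(T,\widetilde{T}) = k-3 < k-2$ and then invoke the converse direction of Proposition \ref{prop:nni}. The only (immaterial) difference is the choice of sizes for $n=2k+1$ --- you use $(2,2,2,2k-5)$ whereas the paper uses $(2k-4,1,2,2)$ --- and your arithmetic and degeneracy check at $k=3$ are both correct.
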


\begin{proof} Let $k\geq 3$ and $n=2k+1$ or $n=2k+2$. Let $T_1$ and $T_2$ be the two trees on $n$ leaves depicted in Figure \ref{fig:AkT1NNI}. Then, $T_1$ and $T_2$ are NNI neighbors: A swap of either leaf 4 (if $n=2k+1$) or cherry $[3,4]$ (if $n=2k+2$) with cherry $[5,6]$ around the bold edge $e$ turns $T_1$ into $T_2$. Now let us analyze this case a bit more in-depth. Let $A \vert B $ be the split induced by $e$ in $T$. Let $T_{A_1}$ be the subtree of $T_1$ and $T_2$ containing leaves $1,2$ as well as $9,\ldots n$. Let $T_{A_2}$ denote the subtree of both $T_1$ and $T_2$ that contains taxon 4 (if $n=2k+1$) or the cherry $[3,4]$ (if $n=2k+2$), respectively. Let $T_{B_1}$ denote the subtree of $T_1$ and $T_2$ consisting of cherry $[7,8]$, and let $T_{B_2}$ denote the subtree of $T_1$ and $T_2$ consisting of cherry $[5,6]$. Then, $T_{A_1}$ has $n_1=2+((2k+2)-8)=2k-4$ many leaves, $T_{A_2}$ has $n_2=1$ or $n_2=2$ leaves, respectively, $T_{B_1}$ has $n_3=2$ leaves and $T_{B_2}$  has $n_4=2$ leaves. This leads to

\begin{align*} \sum\limits_{i=1}^4 \left\lfloor \frac{n_i-1}{2}\right\rfloor= \left\lfloor \frac{2k-5}{2}\right\rfloor +\underbrace{\left\lfloor \frac{0 \mbox{ or }1}{2}\right\rfloor}_{=0}+\underbrace{\left\lfloor \frac{1}{2}\right\rfloor}_{=0} +\underbrace{\left\lfloor \frac{1}{2}\right\rfloor}_{=0} =k-3<k-2.\end{align*}

Therefore, by Proposition \ref{prop:nni}, we have $A_k(T_1)=A_k(T_2)$. This completes the proof.
\end{proof}

\begin{figure} 
\center
\includegraphics[width=0.9\textwidth]{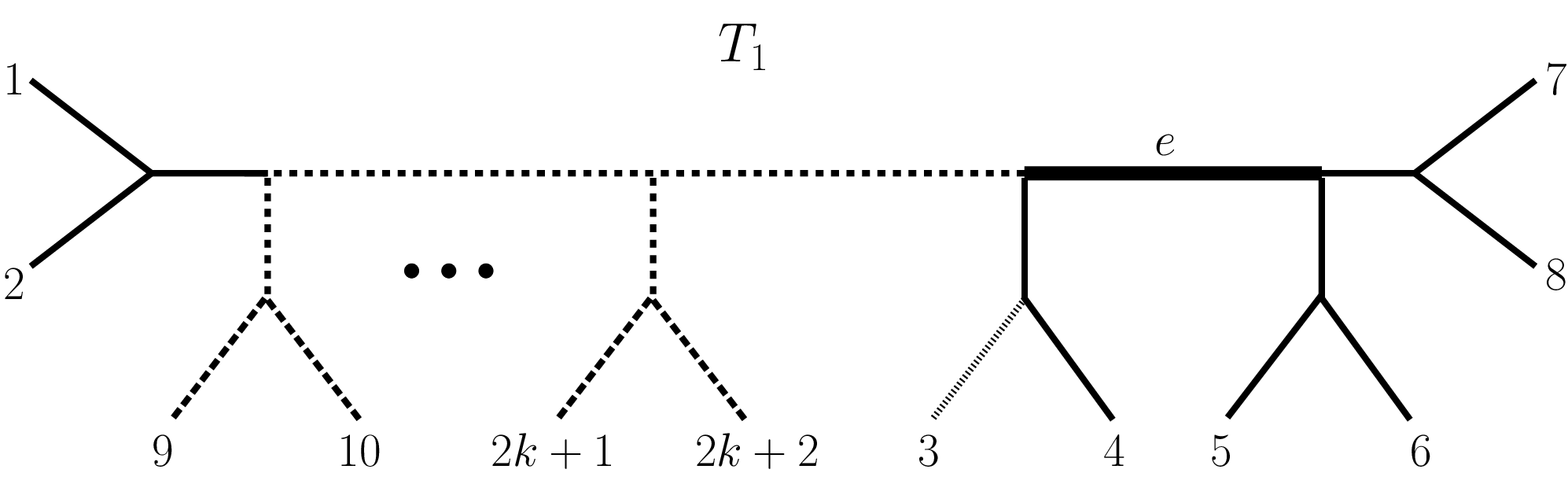}
\par\vspace{.8cm}
\includegraphics[width=0.9\textwidth]{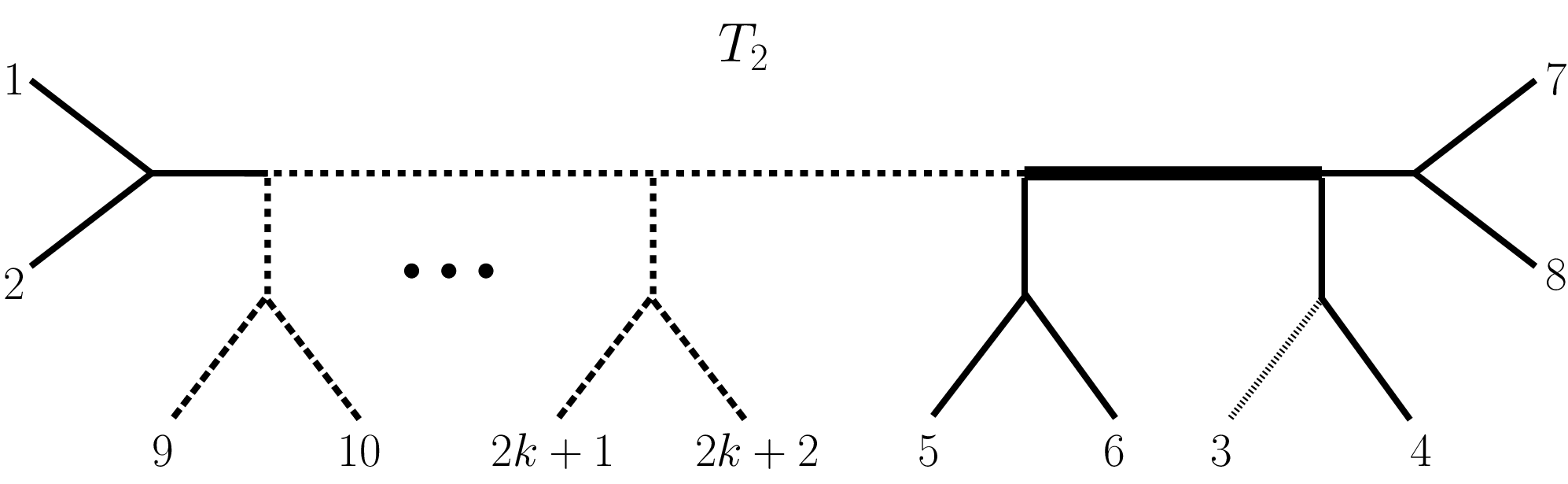}
\caption{  \scriptsize  Two trees $T_1$ and $T_2$. Note that leaf 3 may or may not be there (i.e., there might be $2k+1$ or $2k+2$ leaves in each tree). Note that $T_1$ and $T_2$ are NNI neighbors: A swap of either leaf 4 (if $n=2k+1$) or cherry $[3,4]$ (if $n=2k+2$) with cherry $[5,6]$ around the bold edge $e$ turns $T_1$ into $T_2$. For these trees, we have $A_k(T_1)=A_k(T_2)$ for all $k\geq 3$.}
\label{fig:AkT1NNI}
\end{figure}

\section{Discussion and outlook}
In this manuscript, we have shown that binary phylogenetic trees with $n$ leaves are fully characterized by their $A_k$-alignments whenever $n \geq 4k$ (cf. Theorem \ref{thm:characterization}). Thus, we 
 drastically narrowed the gap resulting from the most recent result in the literature \cite{Fischer2022}, which required $n \geq 20k$, and the fact that was known that for $n=2k$, the statement does not generally hold \cite{Fischer2019}. Additionally, we narrowed the gap further by showing that the statement does not generally hold for up to $n=2k+2$ (cf. Corollary \ref{cor:NNI1}). 
 One very intriguing question for future research is, however, if this gap can be closed completely. We have shown that it can indeed be closed completely for pairs $T$, $ \widetilde{T}$ with $\widetilde{T}$ from the NNI neighborhood of $T$ (cf. Theorem \ref{thm:boundNNI}). In particular, we have shown that the $A_k$-alignment of a binary phylogenetic tree $T$ with $n$ leaves is, for all values of $k\leq \frac{n}{2}-\frac{3}{2}$, unique within the NNI neighborhood of $T$; i.e., no neighbor of $T$ shares the same $A_k$-alignment {with $T$. In fact, we conjecture that the following statement holds, which would close the gap in the interval from $2k+3$ to $4k-1$ affirmatively.

 \begin{conjecture*} Let $k \in \mathbb{N}_{\geq 3}$ and let $T$ and $\widetilde{T}$ binary phylogenetic trees on taxon set $X$ with $\vert X\vert \geq 2k+3$. Then, $T\cong\widetilde{T}$ if and only if $A_k(T) = A_k(\widetilde{T})$. 
 \end{conjecture*}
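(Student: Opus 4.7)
My plan is to prove the conjecture by strong induction on $k$, combining the character construction from the proof of Theorem \ref{thm:characterization} with the cherry reduction machinery (Lemmas \ref{lem:cherryred1}, \ref{lem:cherryred2} and Corollary \ref{cor:cherryred}). The key observation is that the $4k$ bound in Theorem \ref{thm:characterization} stems from the worst-case estimate $|B| \geq n/2$ on the larger side of a size-minimal split $\sigma \in \Sigma(T) \setminus \Sigma(\widetilde T)$, whereas when $\sigma$ can be chosen to be the size-$2$ split induced by a cherry of $T$ that is \emph{not} a cherry of $\widetilde T$, one has $|B| = n - 2$ and the same path-counting argument already goes through at the tight bound $n \geq 2k + 3$.

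For the inductive step, given $T \not\cong \widetilde T$ on $n \geq 2k + 3$ leaves (with $k \geq 3$), I would distinguish two cases based on whether $T$ and $\widetilde T$ share all of their cherries. In the \emph{distinct-cherry case}, some cherry $[x, y]$ of $T$ is not a cherry of $\widetilde T$, and the split $\sigma = \{x, y\} \mid X \setminus \{x, y\}$ lies in $\Sigma(T) \setminus \Sigma(\widetilde T)$ with $A_1 = \{x\}$, $A_2 = \{y\}$ and $|B| = n - 2$. Running the character construction from Theorem \ref{thm:characterization} with this $\sigma$ requires only the path count $c_1 + c_m + c' \geq (|B| - 1)/2 = (n - 3)/2 \geq k$, which holds precisely when $n \geq 2k + 3$. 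Size-minimality of $\sigma$ is unnecessary here, since $\sigma_1$ and $\sigma_2$ are trivial splits and automatically lie in $\Sigma(\widetilde T)$.

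In the \emph{all-cherries-shared case}, any shared cherry $[x, y]$ automatically yields $T^1 \not\cong \widetilde T^1$ (an isomorphism $T^1 \to \widetilde T^1$ would lift, by re-inserting $x$ adjacent to $y$, to an isomorphism $T \to \widetilde T$). For $n \geq 2k + 4$, applying the inductive hypothesis to $T^1$ and $\widetilde T^1$ on $n - 1 \geq 2k + 3$ leaves gives $A_k(T^1) \neq A_k(\widetilde T^1)$, which lifts via Lemma \ref{lem:cherryred2}(1) to $A_k(T) \neq A_k(\widetilde T)$. For the boundary case $n = 2k + 3$ (and analogously for the base case $k = 3$), the type 1 reduction does not apply directly, and one is forced to use a type 2 reduction, which fits the inductive hypothesis (or Proposition \ref{prop:A1A2} when $k = 3$) at parsimony score $k - 1$ on $n - 2 = 2(k - 1) + 3$ leaves; this requires identifying a shared cherry $[x, y]$ such that $T^2 \not\cong \widetilde T^2$.

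The main obstacle is this last step: a type 2 cherry reduction can collapse $T$ and $\widetilde T$ to isomorphic reduced trees even when $T \not\cong \widetilde T$, as in the counterexamples of Corollary \ref{cor:NNI1} at $n = 2k + 2$. My proposal for overcoming it is to argue that at $n \geq 2k + 3$ one can always choose a shared cherry whose type 2 reduction preserves non-isomorphism, namely any cherry lying strictly inside a subtree of size at least three in a suitable block decomposition of $T$ versus $\widetilde T$; a pigeonhole argument using $n \geq 9$ ensures such a cherry exists. For NNI-adjacent pairs this is straightforward via Proposition \ref{prop:nni} (one checks $s(T, \widetilde T) \geq k - 2$ after the reduction), but extending the guarantee to arbitrary non-NNI-adjacent pairs requires a more refined structural analysis, and in the worst case a direct character construction tailored to the very rigid structure imposed by ``every type 2 cherry reduction collapses to isomorphism'' may be necessary.
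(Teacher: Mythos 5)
This statement is posed in the paper as an \emph{open conjecture}: the authors prove it only for $n\geq 4k$ (Theorem \ref{thm:characterization}) and, for NNI-adjacent pairs, for $n\geq 2k+3$ (Theorem \ref{thm:boundNNI}), and they explicitly leave the interval $[2k+3,4k-1]$ open in general. So there is no proof in the paper to compare against, and your proposal must be judged on whether it actually closes the gap. Two of your three cases are sound. In the distinct-cherry case, taking $\sigma=\{x,y\}\mid X\setminus\{x,y\}\in\Sigma(T)\setminus\Sigma(\widetilde T)$ does make the construction from Theorem \ref{thm:characterization} go through: $\sigma_1,\sigma_2$ are trivial, $m\geq 2$ is forced, and $c_1+c_m+c'\geq \frac{\vert B\vert-1}{2}=\frac{n-3}{2}$ together with integrality already gives $k$ paths for $n\geq 2k+2$ (slightly better than you claim). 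Likewise, for $n\geq 2k+4$ with all cherries shared, a type-1 reduction preserves non-isomorphism and Lemmas \ref{lem:cherryred1}(1) and \ref{lem:cherryred2}(1) lift $A_k(T^1)\neq A_k(\widetilde T^1)$ back up correctly.

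The genuine gap is exactly where you admit it is, and it is not a loose end but the entire content of the conjecture: the base case $n=2k+3$ with identical cherry sets, where your induction bottoms out. There you need a shared cherry $[x,y]$ with $T^2\not\cong\widetilde T^2$, and you offer only an unspecified ``pigeonhole argument'' and a ``suitable block decomposition.'' The counterexamples behind Corollary \ref{cor:NNI1} show this is a real danger: for the pair $T_1,T_2$ at $n=2k+2$, \emph{every} shared cherry ($[3,4]$, $[5,6]$, $[7,8]$, and for $k=3$ also $[1,2]$) collapses under a type-2 reduction to isomorphic trees, all cherries are shared, and indeed $A_k(T_1)=A_k(T_2)$. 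So ``all type-2 reductions collapse'' is precisely the regime in which the conclusion can fail, and ruling it out at $n=2k+3$ is the hard core of the problem --- your sketch contains no argument for it, and your fallback (``a direct character construction tailored to the very rigid structure\ldots may be necessary'') concedes as much. Until that case is settled, the induction is not grounded and the conjecture remains open.
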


 Related to this conjecture is the following question: If $\widetilde{T}$ is obtained from $T$ by an NNI operation and given some natural number $k\geq 2$, we call this NNI operation a \textit{problematic move} if it satisfies the condition $s(T,\widetilde{T})<k-2$, where $s(T,\widetilde{T})$ is defined as in Proposition \ref{prop:nni}. Clearly, by Proposition \ref{prop:nni}, if $\widetilde{T}$ is obtained from $T$ by a series of problematic moves, then $A_k(T) = A_k(\widetilde{T})$ (which is why we consider it problematic -- it destroys the uniqueness of the $A_k$-alignment). Now it is tempting to investigate if the converse statement is also true, which leads to the following question.

 \begin{question}\label{qu:2}
     Let $k \in \mathbb{N}_{\geq 3}$ and let $T$ and $\widetilde{T}$ be binary phylogenetic trees on taxon set $X$ and $A_k(T) = A_k(\widetilde{T})$. It is true that $\widetilde{T}$ can be obtained from $T$ by a series of problematic moves?
 \end{question}
 
 By Theorem \ref{thm:boundNNI}, a positive answer to this question would imply the correctness of the above conjecture, because it can easily be seen that for $n\geq 2k+3$ there are no problematic NNI moves.

 Another interesting aspect for future research is tree reconstruction with parsimony based on  $A_k$-alignments. Note that just because an alignment characterizes a tree, this does not necessarily mean that MP (or any other tree reconstruction method) will recover the correct tree. While we know that the unique maximum parsimony tree of $A_1(T)$ is always $T$ (a consequence of the famous Buneman theorem \cite{Buneman1971, Semple2003, Fischer2019}), we know that for $A_2(T)$ this is only true if $n \geq 9$ \cite{Fischer2019}. Exhaustive searches of the tree space performed in  \cite{Goloboff2018} suggest that possibly the latter result can be generalized, i.e., that the unique maximum parsimony tree of $A_k(T)$ might be $T$ whenever $n \geq 4k+1$. This conjecture was formally stated in \cite{Fischer2019}. Note that for MP to be able to recover the \enquote{true} tree from $A_k(T)$, it is a necessary prerequisite that $A_k(T)$ defines $T$. If this was not the case, i.e., if two trees shared the same $A_k$-alignment, there would not be any hope for any tree reconstruction method to distinguish the two trees sharing this alignment from one another.

In summary, as we know that in case $k=2$ we require $n > 4k$ for the reconstruction of trees from their $A_k$-alignments using MP and as we conjecture that this assertion can be generalized to larger values of $k$, the present manuscript, which shows that for $n \geq 4k$ binary phylogenetic trees are defined by $A_k$, is a very useful first step in this regard. Note that the factor of 4 is precisely the one needed to tackle the reconstruction problem. Thus, the improvement from factor 20 to 4 given by the present manuscript is highly relevant, even if the interval $n \in [2k+3,4k-1]$ is still open concerning the characterization of trees by their $A_k$-alignment.

\bibliographystyle{unsrtnat}
\bibliography{references}   

\section*{Declarations}
The authors wish to thank Linda Kn\"uver and Sophie J. Kersting for helpful discussions on the topic. MF also wishes to thank Mike Steel for bringing the topic to her attention.

\section*{Declarations}

\begin{itemize}
\item Funding: No funding was received to assist with the preparation of this manuscript.
\item Conflict of interest/Competing interests: The authors have no competing interests to declare that are relevant to the content of this article.
\item Ethics approval: not applicable 
\item Consent to participate: not applicable 
\item Consent for publication: All authors have given their consent to publish the research findings of the present manuscript.
\item Availability of data and materials: not applicable 
\item Code availability: not applicable
\item Authors' contributions: All authors contributed equally.
\end{itemize}

\end{document}